\tikzstyle{doubled}=[line width=1.5pt] 
\tikzstyle{dot}=[inner sep=0mm,minimum width=2mm,minimum height=2mm,draw,shape=circle]  
\tikzstyle{ddot}=[inner sep=0mm, doubled, minimum width=2.5mm,minimum height=2.5mm,draw,shape=circle]
\tikzstyle{pdot}=[inner sep=0mm, doubled, minimum width=2.5mm,minimum height=2.5mm,shape=circle]
\tikzstyle{phase dimensions}=[minimum size=6mm,font=\footnotesize,inner sep=0.2mm,outer sep=-2mm]
\tikzstyle{phase dot}=[pdot,phase dimensions]
\tikzstyle{wphase dot}=[dot, phase dimensions]
\tikzstyle{hadamard}=[fill=white,draw,inner sep=0.6mm,font=\footnotesize,minimum height=6mm,minimum width=8mm]
\tikzstyle{anti} = [fill=white,draw,inner sep=0.6mm,font=\footnotesize,minimum height=3mm,minimum width=3mm]
\tikzstyle{triang}=[regular polygon,regular polygon sides=3,draw,scale=0.75,inner sep=-0.75pt,minimum width=9mm,fill=white,regular polygon rotate=180]
\tikzstyle{triang_lesssep}=[regular polygon,regular polygon sides=3,draw,scale=0.75,inner sep=-4pt,minimum width=9mm,fill=white,regular polygon rotate=180, text depth=4mm]
\tikzstyle{triangdag}=[regular polygon,regular polygon sides=3,draw,scale=0.75,inner sep=-0.5pt,minimum width=9mm,fill=white]
\newcommand{\boxshape}[3]{%
\pgfdeclareshape{#1}{
\inheritsavedanchors[from=rectangle] 
\inheritanchorborder[from=rectangle]
\inheritanchor[from=rectangle]{center}
\inheritanchor[from=rectangle]{north}
\inheritanchor[from=rectangle]{south}
\inheritanchor[from=rectangle]{west}
\inheritanchor[from=rectangle]{east}
\backgroundpath{
\southwest \pgf@xa=\pgf@x \pgf@ya=\pgf@y
\northeast \pgf@xb=\pgf@x \pgf@yb=\pgf@y

\@tempdima=#2
\@tempdimb=#3

\pgfpathmoveto{\pgfpoint{\pgf@xa - 5pt + \@tempdima}{\pgf@ya}}
\pgfpathlineto{\pgfpoint{\pgf@xa - 5pt - \@tempdima}{\pgf@yb}}
\pgfpathlineto{\pgfpoint{\pgf@xb + 5pt + \@tempdimb}{\pgf@yb}}
\pgfpathlineto{\pgfpoint{\pgf@xb + 5pt - \@tempdimb}{\pgf@ya}}
\pgfpathlineto{\pgfpoint{\pgf@xa - 5pt + \@tempdima}{\pgf@ya}}
\pgfpathclose
}
}}
\tikzstyle{map}=[draw,shape=NEbox,inner sep=2pt,minimum height=6mm,fill=white]
\tikzstyle{mapdag}=[draw,shape=SEbox,inner sep=2pt,minimum height=6mm,fill=white]
\tikzstyle{maptrans}=[draw,shape=SWbox,inner sep=2pt,minimum height=6mm,fill=white]
\tikzstyle{mapconj}=[draw,shape=NWbox,inner sep=2pt,minimum height=6mm,fill=white]
\tikzstyle{dmap}=[draw,doubled,shape=NEbox,inner sep=2pt,minimum height=6mm,fill=white]
\tikzstyle{dmapdag}=[draw,doubled,shape=SEbox,inner sep=2pt,minimum height=6mm,fill=white]
\tikzstyle{dmaptrans}=[draw,doubled,shape=SWbox,inner sep=2pt,minimum height=6mm,fill=white]
\tikzstyle{dmapconj}=[draw,doubled,shape=NWbox,inner sep=2pt,minimum height=6mm,fill=white]
\pgfmathsetmacro{\pgf@shorten@left}{\pgfkeysvalueof{/tikz/shorten left}}
\pgfmathsetmacro{\pgf@shorten@right}{\pgfkeysvalueof{/tikz/shorten right}}
\pgfmathsetmacro{\pgf@shorten@left}{\pgfkeysvalueof{/tikz/shorten left}}
\pgfmathsetmacro{\pgf@shorten@right}{\pgfkeysvalueof{/tikz/shorten right}}
\tikzstyle{kpoint common}=[draw,fill=white,inner sep=1pt,minimum height=4mm]
\tikzstyle{kpoint}=[shape=cornerpoint,shorten left=5pt,kpoint common]
\tikzstyle{kpoint adjoint}=[shape=cornercopoint,shorten left=5pt,kpoint common]
\tikzstyle{kpoint conjugate}=[shape=cornerpoint,shorten right=5pt,kpoint common]
\tikzstyle{kpoint transpose}=[shape=cornercopoint,shorten right=5pt,kpoint common]
\tikzstyle{kpointdag}=[kpoint adjoint]
\tikzstyle{kpointadj}=[kpoint adjoint]
\tikzstyle{kpointconj}=[kpoint conjugate]
\tikzstyle{kpointtrans}=[kpoint transpose]
\tikzstyle{big kpoint}=[kpoint, minimum width=1.0 cm, minimum height=2mm, inner sep=4pt, text depth=1.5mm]
 \tikzstyle{upground}=[circuit ee IEC,thick,ground,rotate=90,scale=1.5]
 \tikzstyle{downground}=[circuit ee IEC,thick,ground,rotate=-90,scale=1.5]
\tikzstyle{discarding}=[fill=white, draw=black, shape=circle, style=upground]
\tikzstyle{smalldiscarding}=[fill=white, draw=black, style=upground, scale=0.5]
\tikzstyle{backdiscard}=[fill=white, draw=black, shape=circle, style=downground, scale=0.5]
\tikzstyle{smallbackdiscard}=[fill=white, draw=black, shape=circle, style=downground, scale=0.5]
\tikzstyle{state}=[fill=white, draw=black, style=triang, tikzit shape=rectangle]
\tikzstyle{kstate}=[fill=white, draw=black, style=kpoint, tikzit shape=rectangle]
\tikzstyle{kstateconj}=[fill=white, draw=black, style=kpoint conjugate, tikzit shape=rectangle]
\tikzstyle{kstateBIG}=[fill=white, draw=black, style=big kpoint, tikzit shape=rectangle]
\tikzstyle{effect}=[fill=white, draw=black, style=triangdag]
\tikzstyle{keffect}=[fill=white, draw=black, style=kpoint adjoint]
\tikzstyle{keffectconj}=[fill=white, draw=black, style=kpoint transpose]
\tikzstyle{morphdag}=[style=mapdag]
\tikzstyle{morph}=[style=hadamard]
\tikzstyle{WIDEmorph}=[style=hadamard, minimum width=14mm]
\tikzstyle{morphtrans}=[style=maptrans]
\tikzstyle{morphconj}=[style=mapconj]
\tikzstyle{CPMmorph}=[style=dmap]
\tikzstyle{CPMmorphconj}=[style=dmapconj]
\tikzstyle{CPMmorphdag}=[style=dmapdag]
\tikzstyle{CPMmorphtrans}=[style=dmaptrans]
\tikzstyle{CPMstate}=[fill=white, draw=black, style=triang, doubled]
\tikzstyle{CPMstateBIG}=[fill=white, draw=black, style={triang_lesssep}, doubled]
\tikzstyle{CPMkstate}=[fill=white, draw=black, style=kpoint, tikzit shape=rectangle, doubled]
\tikzstyle{CPMkstateconj}=[fill=white, draw=black, style=kpoint conjugate, tikzit shape=rectangle, doubled]
\tikzstyle{CPMkstateBIG}=[fill=white, draw=black, style=big kpoint, tikzit shape=rectangle, doubled]
\tikzstyle{CPMkeffect}=[fill=white, draw=black, style=kpoint adjoint, doubled]
\tikzstyle{CPMkeffectconj}=[fill=white, draw=black, style=kpoint transpose, doubled]
\tikzstyle{UHfB}=[fill=white, draw=black, style=triangdag, doubled, inner sep=-2pt]
\tikzstyle{leak}=[style=tinypoint, regular polygon rotate=-90]
\tikzstyle{leakfill}=[style=tinypoint, regular polygon rotate=-90, fill=black]
\tikzstyle{Z}=[style=dot, fill=green]
\tikzstyle{X}=[style=dot, fill=red]
\tikzstyle{black_dot}=[style=dot, fill=black]
\tikzstyle{white_dot}=[style=dot, fill=white]
\tikzstyle{qblack_dot}=[style=ddot, fill=black]
\tikzstyle{qwhite_dot}=[style=ddot, fill=white]
\tikzstyle{whitephase}=[style=wphase dot, fill=white]
\tikzstyle{qredphase}=[style=phase dot, fill=red]
\tikzstyle{qgreenphase}=[style=phase dot, fill=green]
\tikzstyle{had}=[style=hadamard, doubled]
\tikzstyle{box}=[style=hadamard]
\tikzstyle{classhad}=[style=hadamard]
\tikzstyle{antipode}=[style=anti]
\tikzstyle{dottededge}=[-, dotted]
\tikzstyle{double edge}=[-, style=doubled, draw=black, tikzit draw={rgb,255: red,18; green,168; blue,191}]
\tikzstyle{new edge style 0}=[<-]
\tikzstyle{new edge style 1}=[-, draw={rgb,255: red,223; green,66; blue,126}, fill={rgb,255: red,223; green,66; blue,126}]
\tikzstyle{new edge style 2}=[-, draw={rgb,255: red,14; green,188; blue,83}]
\tikzstyle{new edge style 3}=[<-, draw={rgb,255: red,223; green,66; blue,126}]
\newcommand{\tikzfigscale}[2]{\scalebox{#1}{\tikzfig{#2}}}
\def\be{\begin{equation}}
\def\ee{\end{equation}}
\def\ba{\begin{align}}
\def\ea{\end{align}}
\newtheorem{definition}{Definition}
\newtheorem{theorem}{Theorem}
\newtheorem{corollary}{Corollary}
\newtheorem{lemma}{Lemma}
\newtheorem{example}{Example}
\tikzstyle{every picture}=[baseline=-0.25em,shorten <=-0.1pt]
\tikzstyle{dotpic}=[scale=0.5]
\tikzstyle{braceedge}=[decorate,decoration={brace,amplitude=1mm,raise=-1mm}]
\tikzstyle{dot}=[inner sep=0.7mm,minimum width=0pt,minimum height=0pt,fill=black,draw=black,shape=circle]
\tikzstyle{small dot}=[inner sep=0.1mm,minimum width=0pt,minimum height=0pt,fill=black,draw=black,shape=circle]
\tikzstyle{black dot}=[dot]
\tikzstyle{white dot}=[dot,fill=white]
\tikzstyle{gray dot}=[dot,fill=gray!40!white]
\tikzstyle{alt white dot}=[white dot,label={[xshift=3mm,yshift=-0.05mm,font=\tiny]left:$*$}]
\tikzstyle{alt gray dot}=[gray dot,label={[xshift=3mm,yshift=-0.05mm,font=\tiny]left:$*$}]
\tikzstyle{white norm}=[rectangle,fill=white,draw=black,minimum height=2mm,minimum width=2mm,inner sep=0pt,font=\small]
\tikzstyle{gray norm}=[white norm,fill=gray!40!white]
\tikzstyle{square box}=[rectangle,fill=white,draw=black,minimum height=5mm,minimum width=5mm,font=\small]
\tikzstyle{square gray box}=[rectangle,fill=gray!30,draw=black,minimum height=6mm,minimum width=6mm]
\tikzstyle{diredge}=[->]
\tikzstyle{rdiredge}=[<-]
\tikzstyle{dashed edge}=[dashed]
\tikzstyle{cross}=[preaction={draw=white, -, line width=3pt}]
\newcommand{\dotdualmult}[1]{%
\!\begin{tikzpicture}[dotpic]
    \node [style=white dot] (0) at (0, 0.3) {};
    \node [style=none] (1) at (-0.5, -0.4) {};
    \node [style=none] (2) at (0.5, -0.4) {};
    \node [style=none] (3) at (0, 0.8) {};
    \draw [style=diredge] (3.center) to (0);
    \draw [style=diredge, in=15, out=-30, looseness=1.50] (0) to (1.center);
    \draw [style=diredge, in=165, out=-150, looseness=1.50] (0) to (2.center);
\end{tikzpicture}\!}
\newcommand{\dotconorm}[1]{%
\,\begin{tikzpicture}[dotpic,yshift=0.4mm]
    \node [style=none] (0) at (0, -0.4) {};
    \node [style=white norm] (1) at (0, 0.1) {};
    \node [style=none] (2) at (0, 0.5) {};
    \draw [style=diredge] (1) to (0.center);
    \draw (2.center) to (1);
\end{tikzpicture}\,}
\newcommand{\astfootnote}[1]{
\let\oldthefootnote=\thefootnote
\setcounter{footnote}{0}
\renewcommand{\thefootnote}{\fnsymbol{footnote}}
\footnote{#1}
\let\thefootnote=\oldthefootnote
}
\title{A Mathematical Framework for Transformations of Physical Processes}
\author{Matt Wilson}
\email{matthew.wilson@cs.ox.ac.uk}
\affiliation{Quantum Group, Department of Computer Science, University of Oxford}
\affiliation{HKU-Oxford Joint Laboratory for Quantum Information and Computation}
\author{Giulio Chiribella}
\email{giulio.chiribella@cs.ox.ac.uk}
\affiliation{QICI Quantum Information and Computation Initiative, Department of Computer Science}
\affiliation{Quantum Group, Department of Computer Science, University of Oxford}
\affiliation{HKU-Oxford Joint Laboratory for Quantum Information and Computation}
\affiliation{Perimeter Institute for Theoretical Physics, 31 Caroline Street North, Waterloo, Ontario, Canada}
\begin{document} \emergencystretch 3em

\maketitle

\begin{abstract}

We observe that the existence of sequential and parallel composition supermaps in higher order theories of transformations can be formalised using enriched category theory. Encouraged by relevant examples such as unitary supermaps and layers within higher order causal categories (HOCCs), we treat the modelling of higher order physical theories with enriched monoidal categories in analogy with the modelling of physical theories with monoidal categories. We use the enriched monoidal setting to construct a suitable definition of structure preserving map between higher order physical theories via the Grothendieck construction. We then show that the convenient feature of currying in higher order physical theories can be seen as a consequence of combining the primitive assumption of the existence of parallel and sequential composition supermaps with an additional feature of \textit{linking}. We then use our definition of structure preserving map to show that categories containing infinite towers of enriched monoidal categories with full and faithful structure preserving maps between them inevitably lead to closed monoidal structures. The aim of the proposed definitions is to step towards providing a broad framework for the study and comparison of novel causal structures in quantum theory, and, more broadly, a paradigm of physical theory where static and dynamical features are treated in a unified way.  

\end{abstract}

\tableofcontents

\section{Introduction}
Traditionally,  physical theories have been concerned with the laws governing the evolution of certain physical systems, such as particles or fields. 
 In the ontology of a theory, the physical systems are regarded as fundamental objects, while their evolution  is  regarded as a tool for predicting relations among objects in different regions of space and time.   
   Over the past decade,  a series of works in quantum information theory started exploring the idea that processes themselves could be regarded as objects, which can be acted upon by a kind of higher order physical transformations, known as quantum supermaps   \cite{Chiribella2008TransformingSupermaps,chiribella2009theoretical,Chiribella2013QuantumStructure,chiribella2013normal, perinotti2017causal,Bisio2019TheoreticalTheory}.  
The introduction of the quantum supermap framework has led to developments in quantum information theory by giving a way to formulate protocols in which transformations are treated as resources,  \cite{chiribella2008optimal,bisio2009optimal,bisio2010optimal,Ebler2018EnhancedOrder,gour2019quantify,miyazaki2019complex,quintino2019probabilistic,sedlak2019optimal, Guerin2016ExponentialCommunication}, and has led to the study of new types of causal structures compatible with quantum mechanics \cite{Chiribella2013QuantumStructure,Oreshkov2012QuantumOrder,Branciard2015TheViolation, device_independent_causal_lugt, gogioso2023geometry}. In addition, higher order transformations  provide a broad framework for general physical theories with dynamical causal structure, and, as originally suggested by Hardy \cite{hardy2007towards} are expected contribute to the formulation of a complete theory of quantum gravity. Complementary to this research direction is the development of programming languages which permit higher order types whilst retaining compatibility with quantum theory (by forbidding cloning \cite{Wootters1982ACloned}, the signature of the Cartesian monoidal structure underlying the standard lambda calculus), such as \textit{linear} or \textit{quantum} lambda calculi  \cite{Selinger2004TowardsLanguage, Selinger2005AControl, VanTonder2004AComputation, BentonLinear-CalculusRevisited, Pagani2013Applying, Ambler1991FirstCategories}.

A compositional foundation for the study of physical theories, including quantum and classical theory, is provided by the process theory framework \cite{Coecke2017PicturingReasoning}. The framework is built on the notion of a symmetric monoidal category, which captures some basic structures present in a broad class of physical theories.  Such structures include a notion of system, a notion of processes between systems, and, crucially, a notion of the sequential and parallel composition of processes, diagrammatically represented as
\begin{equation}
  \tikzfig{figs/proccomp}.
 \end{equation}
 The tensor unit $I$ represents the {\em trivial system}, and, for every object $A$,  the processes of type $I \to A$ are viewed as {\em states}  of system $A$. 
 The process theoretic treatment has contributed new insights into quantum foundations and the general structure of physical theories \cite{Selby2018ReconstructingPostulates, Vicary2011CategoricalAlgebras, Coecke2007QuantumSums,Coecke2009InteractingDiagrammatics,Coecke2010TheEntanglement, Coecke2011PhaseQubits, Coecke2013CausalProcesses,  Heunen2013IntroductionMechanics, Selinger2007DaggerAbstract, Tull2020ATHEORY, GalleyATheory}, leading to the natural question of whether new insights could be gained in the recently developed higher order approach to the foundations of physics by developing an analogous framework of \textit{higher order process theories}.  

Recently, a variety of works set out to establish compositional features of higher order processes, in linear algebraic \cite{Bisio2019TheoreticalTheory, apadula2022nosignalling, hoffreumon2022projective}, and categorical \cite{Kissinger2019AStructure,pbs_paper, simmons2022higherorder} settings. Most relevant to this paper, in \cite{Kissinger2019AStructure}, a process theoretic framework for supermaps was developed for the purpose of providing a categorical language for causal structures. In this framework, causal structures are represented by the objects of a $*$-autonomous category of higher order processes $\mathbf{Caus(}\mathbf{C}\mathbf{)}$ built from a ``pre-causal'' category $\mathbf{C}$. This construction reveals deep relations between $*$-autonomy and the structure of higher order transformations in quantum theory, in particular producing a convenient type system for reasoning about causal structures. However, there is a sense that the notion of a raw-material pre-causal (and so compact closed) category may be too restrictive a requirement in the study of infinite dimensional systems such as those encountered in quantum field theory, and ultimately quantum theories of gravity.

In this work we aim to to pin down a notion of a higher order theory as a mathematical structure in its own right, independently of the study of causality, and independently of the notion of a raw material category from which a theory might be constructed. Our main motivations for formalisation of structural features of higher order physical theories are the following:

\begin{itemize}
    \item Quantum resource theories are often modelled in the abstract by construction from the notion of a sub-symmetric monoidal category \cite{Coecke2014AResources}, at the level of identifying quantum theory as forming a symmetric monoidal category, this essentially phrases the definition of a resource theory as simply the definition of a sub-theory. To build on the approaches of \cite{takagiresourcechannels, Kristjansson2020ResourceCommunication, gour2020dynamical} and fully extend resource theories in a satisfactory way to higher order quantum theory, we need to develop a notion of higher order sub-theory.
    \item Identification of basic compositional features of higher order physics could open the door for the study of higher order principles for the axiomatisation of physical theories, in analogy with reconstructions for standard quantum theory which work with symmetric monoidal categories as a background axiom \cite{Selby2018ReconstructingPostulates, Tull2020ATHEORY, Chiribella_2010_prob_recon}. Preliminary investigations in this direction, which build from the results of this paper by beginning from the assumption of closed monoidal structure and then imposing causality principles, are presented in \cite{wilsoncausality}.
    \item Once enough structural features are identified it may be possible to characterise supermap definitions in terms of universal properties. The generally accepted definition of first-order quantum process can be motivated in this way, as arising from a universal property with respect to affine monoidal structure \cite{statonDBLP:journals/corr/abs-1901-10117}\footnote{In contrast to the more stable state of affairs in standard quantum theory, the identification of certain constructions of supermaps as universal could offer an alternative perspective on the still-open issue \cite{Ara_jo_2017_purification, Feix_2016_causal_model} of which of the numerous classes of supermaps \cite{Wechs_2021_control} should be considered reasonable.}.
\end{itemize}

The contributions of this paper can be summarised in three parts, first, enrichment is used to formalise in categorical terms the existence of sequential and parallel composition supermaps. Second, suitable structure preserving maps with respect to enrichment are defined. Third, closed monoidal structure of higher order quantum theories is motivated in terms of extra axioms on top of monoidal enrichment, in doing so the previously developed notion of structure-preserving map is applied. We now expand on each of these points in more detail. 
\paragraph{Process Manipulation:}
Any symmetric monoidal category $\mathbf{C}$, can be interpreted as theory of processes which can be composed in sequence or in parallel. A feature common to theories of transformations of processes is the existence of higher order transformations which actually \textit{perform} these composition rules. Namely, a key feature is the existence of higher order processes which put processes together in sequence or in parallel \[  \tikzfig{figs/axiom_seq_a} \quad \quad \quad \quad \tikzfig{figs/axiom_par_1}. \] This feature of higher order processes can be identified with the categorical definition of a $\mathbf{V}_{\cong}$-smc $\mathbf{C}$. This equips the higher theory $\mathbf{V}$ with types $[A,A']$ representing the space of processes of type $A \rightarrow A'$ in a lower order theory $\mathbf{C}$, and a tensor $\otimes_{V}$ so that $[A,A'] \otimes [B,B']$ represents the space of bipartite processes which can be plugged together in either order, being treated as freely manipulable.
 \paragraph{Structure Preserving Maps:}
 Our second contribution is to show that one can define structure preserving maps between theories of higher order transformations. These are defined to be functors $\mathcal{F}^C:\mathbf{C} \rightarrow \mathbf{C}', \mathcal{F}^{V}:\mathbf{V} \rightarrow \mathbf{V}'$ on the lower and higher order parts of the theory along with morphisms of type $\mathcal{F}_{AB}:\mathcal{F}^{V}[A,B] \rightarrow [\mathcal{F}^{C}A,\mathcal{F}^{C}B]$ which encode the preservation  of enriching structure \[  \tikzfig{figs/inuitive_functor}. \] This introduced notion of structure preserving map allows us to formalise the following, given the statements $(a)$: $\mathbf{C}^3$ is a theory of higher order transformations for $\mathbf{C}^2$ and $(b)$: $\mathbf{C}^2$ is a theory of higher order transformations for $\mathbf{C}^1$, then there is a morphism $\Gamma:(b) \rightarrow (a)$ embedding the latter statement into the former \[  \tikzfig{figs/inuitive_functor_2}. \] A key feature of such structure preserving maps is that they are composable, forming a category of higher-order theories. 
 \paragraph{Linking and Process Manipulation $\rightarrow$ Closed Monoidal Structure:}
 The third contribution is to show that the possibility to \textit{curry} processes, that is, the existence of closed monoidal structure, can be derived from appending an additional notion to the above primitive operational principles observed in higher order physical theories. We show that currying can be viewed as a consequence of (i) The possibility to compose processes in sequence, (ii) The possibility to compose processes in parallel, (iii) The possibility to translate between an object $A$ and the space $[I,A]$ of states on $A$. These principles are combined together in the definition of a \textit{linked}, and \textit{faithful}, enriched monoidal category, which is shown to be equivalent to the definition of closed monoidal category. The crux of the proof can be conveyed intuitively using the following picture \[  \tikzfig{figs/axiom_eval}.   \] Here we see links used to convert between first and second-order systems, combined with sequential composition supermaps to construct an \textit{evaluation} process of type $A \otimes [A,B] \rightarrow B$. This result allows us to state, a series of reasonable physical principles which motivate working with closed monoidal structure without \textit{directly} assuming the possibility to curry processes. 
 
 By using the introduced notion of structure  preserving map $\Gamma:(b) \rightarrow (a)$ between layers within sequences of higher order theories: we proceed to generalise this result by showing that any infinite sequence of enriched monoidal categories, with well behaved structure preserving maps between the layers of the sequence, leads to closed monoidal structure \[\tikzfig{figs/hopt_2a}. \] Closed monoidal structure is a simple to state and easy to interpret mathematical structure on top of monoidal structure, the aim of these results is to show that deductions made by combining other physical principles with closed monoidal structure are likely to be general statements about higher order physical theories, first steps in the direction of combining closed monoidal structure with other standard physical principles such as causality and determinism are given in \cite{wilsoncausality}.

\section{Notation and basic definitions}
We will use the abbreviation SMC for \textit{symmetric monoidal category}, the definition of which may be found in \cite{Lane1971CategoriesMathematician}. An SMC consists of objects $A,B, \dots$ morphisms $f:A \rightarrow B$ and composition rules $(\circ, \otimes)$. A morphism $f:A \rightarrow B$ can be represented by a box with input wire $A$ and output wire $B$. The parallel composition $f \otimes g$ of morphisms is written by placing $f$ next to $g$, the sequential composition $g \circ f$ of $f:A \rightarrow B$ and $g:B \rightarrow C$ is written by connecting boxes along wire $B$ as in the following pictures: 
\begin{equation}
  \tikzfig{figs/proccomp} .
 \end{equation}
There is furthermore a unit object $I$ which is not explicitly written, interpreted as representing only empty space. Similarly associativity of sequential composition and associativity of parallel composition up to natural isomorphism are absorbed into the graphical language, neither being explicitly written. The categorical notion of one monoidal category living inside another is that of a \textit{monoidal functor}. In general, there may be more than one way of  representing a monoidal category $\mathbf{C}$ inside another monoidal category $\mathcal{D}$. The notion of equivalence between two representations  is that of a \textit{monoidal natural isomorphism}. We will often refer to a full subcategory of a category $\mathbf{C}$ with objects given by combining all objects in some collection $S \subseteq ob(\mathbf{C})$ iteratively using some family of functions $\boxtimes_k :ob(\mathbf{C}) \times ob(\mathbf{C}) \rightarrow ob(\mathbf{C})$. As a shorthand for such a collection generated by $S$ and functions $\boxtimes_k$ we use the symbol $S|\boxtimes_1| \dots | \boxtimes_n$.

\paragraph{Closed Monoidal Categories:}
 Closed monoidal structure \cite{Lane1971CategoriesMathematician} is a standard categorical structure defined with the purpose of abstracting the notion of currying found in the category $\mathbf{Set}$ of functions between sets. Currying for functions is the property that for every pair of sets $A,B$ there exists a function $\texttt{eval}_{A,B}: A \times \mathbf{Set}(A,B) \rightarrow B$ which on elements is defined by $\texttt{eval}_{A,B}(a,f) : = f(a)$. In the general monoidal setting then closed monoidal structure of a monoidal category $\mathbf{C}$ is given by requiring a co-universal arrow of type $\texttt{eval}_{A,B}:A \otimes [A,B] \rightarrow B$. Explicitly such a co-universal arrow specifies for each morphism $f:A \otimes C \rightarrow B$ a unique morphism $\hat{f}:C \rightarrow [A,B]$ such that \[\tikzfig{figs/closed_monoidal}. \] This generalises currying by giving a natural isomorphism between $\mathbf{C}(A \otimes  C,B)$ and $\mathbf{C}(C,[A,B])$. The evaluation morphism can intuitively be understood as an open hole into which a process can be inserted.

\paragraph{Higher Order Causal Categories:}
Moving beyond first-order process theories in which objects are interpreted as representing state spaces and processes are interpreted as transformations of states, is the higher order framework for quantum theory, in which transformations of processes are considered, intuitively represented as \begin{equation}
  \tikzfig{figs/informal}.
 \end{equation} To define iterated higher order transformations of quantum processes a construction of higher order quantum theory ($\mathbf{HOQT}$) is provided in \cite{Bisio2019TheoreticalTheory} which takes advantage of the Choi isomorphism. The Choi isomorphism can be viewed as a particular consequence of compact closure, and using this observation the deterministic part of $\mathbf{HOQT}$ can be generalised to a wider variety or raw-material physical theories. For any pre-causal (and so compact closed) category $\mathbf{C}$ a category $\mathbf{Caus}[\mathbf{C}]$, which we now for notational convenience refer to as $[\mathbf{C}]$, can be constructed \cite{Kissinger2019AStructure}. $[\mathbf{C}]$ includes lower order types $A,B, \dots$, two tensor products $(\otimes, \& )$, and methods for constructing higher order types given by closed monoidal structure. To each pair of objects $A,B$ an object $[A,B]$ is specified representing the space of transformations from $A$ to $B$. Higher order processes can then be represented as those acting on higher order types, for instance $S:[A,A'] \rightarrow [B,B']$ represents a quantum supermap from processes of type $A \rightarrow A'$ to processes of type $B \rightarrow B'$. In this paper we will refer to the subcategory $[\mathbf{C}]^1$ as the category with objects given by \textit{first order types} as defined in \cite{Kissinger2019AStructure}. For $\mathbf{C} := \mathbf{CPM}[\mathbf{FHilb}]$ then $[\mathbf{C}]^1$ is equivalent to the category of CPTP quantum processes. We similarly refer to objects in $[A_1,B_1]|\otimes| \& $ with $A_1,B_1$ first order types as second order types and define $[\mathbf{C}]^{2}$ to be the full subcategory with objects given by second order types.  Iterating this we can define $[\mathbf{C}]^{n+1}$ with objects given by $[A_n,B_n]|\otimes| \& $. Each such category is symmetric monoidal with unit given by $I_{n} := [I,[I,[ \dots ]]$. 

 Intuitively, each of these categories $[\mathbf{C}]^{n+1}$ can be interpreted as a theory of higher order transformations of the processes in $[\mathbf{C}]^{n}$. In this paper we will argue that a key feature of $[\mathbf{C}]^{n+1}$ which allows it to be considered in this way is that it contains morphisms which implement the sequential and parallel composition of morphisms of $[\mathbf{C}]^{n}$. Let us now see how such morphisms can be seen to be present. For first order types $A,B,A',B'$ then $[A , A'] \otimes [B , B']$ represents the space of no-signalling channels from $AB$ to $A'B'$. I.E those which forbid signalling from $A$ to $B'$ and from $B$ to $A'$: \[ \tikzfig{figs/loop_sig_pre}.  \]
 
 For any $A,B,C$ in $[\mathbf{C}]^1$ there exists a morphism $\circ_{ABC}: [A,B] \otimes [B,C] \rightarrow [A,C]$ in $[\mathbf{C}]^2$ given by plugging together wires of non-signalling channels  \[ \tikzfig{figs/loop_sig}.  \] Indeed, one can see that $\circ_{ABC}$ is a morphism by checking it preserves first order processes, this is verified by noting that every deterministic non-signalling transformation factorises as: \[ \tikzfig{figs/loop_decomp}  \] and so applying the morphism $\circ_{ABC}$ gives: \[ \tikzfig{figs/loop_comp}  \] which is a morphism of $[\mathbf{C}]^1$ since $[\mathbf{C}]^1$ is an smc. This observation, that there is a composition process which can be applied to tensor products of types generalises to $[\mathbf{C}]^i$ and can be seen as a consequence of the fact that $[\mathbf{C}]$ is a closed monoidal category. In this sense the existence of a sequential composition process can be seen as a generalisation of the non-signalling property of quantum channels to higher order processes.

\section{Main Observation: Monoidal Enrichment}
In this section we aim to highlight and formalise a prominent feature of higher order physical theories, the existence of primitive higher order sequential and parallel composition processes. In short, we make the following observation \[ \textit{Higher Order Theories of Transformations are Enriched Symmetric Monoidal Categories}. \] The use of enrichment as a semantics for higher-order manipulation of functions has been previously observed in \cite{statonlmcs:6192}, where the higher order theory was taken to be Cartesian, we will however be interested in theories with non-trivial correlations between processes being manipulated such as those present between the two halves of a non-signalling channel. When we use the term ''enriched monoidal category" we mean a slight generalisation of it's standard usage \cite{Kelly2005}, rather than the notion of a $\mathbf{V}$-enriched symmetric monoidal category $\mathbf{C}$ (from now on termed $\mathbf{V}$-smc $\mathbf{C}$ for short) we use the notion of a $\mathbf{V}_{\cong}$- smc $\mathbf{C}$ (defined explicitly in Appendix A). In a $\mathbf{V}_{\cong}$- enriched symmetric monoidal category $\mathbf{C}$, for each pair of objects $A,B$ of $\mathbf{C}$ there exists an object $[A,B]$ in $\mathbf{V}$ whose states represent processes in the underlying category $\mathbf{C}$ via a bijection \[ \kappa: \mathbf{C}(A,B) \cong \mathbf{V}(I,[A,B]) . \] In string diagrams of $\mathbf{V}$ and $\mathbf{C}$ respectively the isomorphism can be represented by: \begin{equation}
    \tikzfig{figs/kappa}.
\end{equation}
From now on we refrain from explicitly writing $\kappa$ whenever its presence is clear. In the standard definition of a $\mathbf{V}$-smc $\mathbf{C}$ this bijection would be required to be an equality and so would not so easily incorporate standard constructions of higher order physical theories in which variants of the Choi isomorphism are used \cite{Choi1975CompletelyMatrices, Bisio2019TheoreticalTheory, Kissinger2019AStructure}. For each $[A,B]$ and $[B,C]$ in a $\mathbf{V}_{\cong}$-smc $\mathbf{C}$ there is a morphism in $\mathbf{V}$ which allows to plug their underlying processes together:  \[\bigcirc: [A,B] \otimes [B,C] \rightarrow [A,C]\] represented formally on the left as a string diagram in $\mathbf{V}$ and informally on the right to show intuitively its action on the underlying category $\mathbf{C}$:
\begin{equation}
    \tikzfig{figs/axiom_seq_a}.
\end{equation}
Associativity and unitality of the sequential composition process in $\mathbf{V}$ and the guarantee that it actually implements sequential composition for $\mathbf{C}$, are represented by: 
\begin{equation}
    \tikzfig{figs/comp_new}.
\end{equation}
Monoidal enrichment \cite{Kelly2005} provides furthermore a parallel composition process in $\mathbf{V}$: of type \[\otimes_{ABA'B'} : [A,A'] \otimes [B,B'] \rightarrow [A \otimes B, A' \otimes B']\] which can be represented formally and intuitively respectively by:
\begin{equation}
    \tikzfig{figs/axiom_par_1}.
\end{equation}
Again in a $\mathbf{V}_{\cong}$-smc $\mathbf{C}$ the following conditions are required, guaranteeing that $\otimes_{ABA'B'}$ really does behave like a parallel composition process:
\begin{equation}
    \tikzfig{figs/comppar_new}.
\end{equation}
These conditions represent associativity of parallel composition, parallel composition with empty space having no effect, and finally that the morphism indeed implements the parallel composition of processes respectively. Compatibility between symmetries is enforced by:
\begin{equation}
    \tikzfig{figs/symy_law},
\end{equation}
and lastly the condition:
\begin{equation}
    \tikzfig{figs/interchange},
\end{equation}
is required, which represents the interchange law between sequential and parallel composition. The above conditions are technically not well typed unless $\mathbf{C}$ is strictly monoidal, the relaxation to the case in which $\mathbf{C}$ is non-strict is given in Appendix A.

\subsection{Examples}
We now present a series of examples of theories of supermaps, and observe that each is indeed an example of an enriched symmetric monoidal category.

\begin{example}[Higher Order Causal Categories]
For any $i$ there is a $[\mathbf{C}]_{\cong}$-smc $[\mathbf{C}]^{i}$ and similarly for any $i,i+1$ there is a $[\mathbf{C}]^{i+1}_{\cong}$-smc $[\mathbf{C}]^{i}$. In each case enrichment is ensured by the closed monoidal structure of $[\mathbf{C}]$ \cite{Kissinger2019AStructure}. 
\end{example}

\begin{example}[Superunitaries]
For any sub symmetric monoidal category $\mathbf{M} \subseteq [\mathbf{C}]^{1}$ (where we consider $[\mathbf{C}]$ equipped with the $\otimes$ product) such that \[ \forall \phi \in \mathbf{M} \quad s.t \quad \tikzfig{figs/subcat_pre_1}: \quad \quad  \exists L,M \in \mathbf{M} \quad s.t  \quad \tikzfig{figs/subcat_pre_2} ,  \] one can construct a category $\mathbf{M}^2$ of completely-$\mathbf{M}$-preserving supermaps and a corresponding $\mathbf{M}^2_{\cong}$-smc $\mathbf{M}$. The objects of $\mathbf{M}^2$ are taken to be objects in $[\mathbf{M},\mathbf{M}] | \otimes | \&$\footnote{Here we use $[\mathbf{M},\mathbf{M}]$ to represent the collection of all objects of the form $[A,B]$ with $A$ and $B$ objects of $\mathbf{M}$.} equipped with a preferred decomposition in terms of $[\mathbf{M},\mathbf{M}],\otimes,$ and $\&$. Note that for each object $\mathcal{X}$ equipped with such a decomposition there is a (natural) embedding $q^{\mathcal{X}}: \mathcal{X} \rightarrow \mathcal{X}^{\otimes \rightarrow \&}$ (from the isomix structure of $\mathbf{Caus}[\mathbf{C}]$). Here $\mathcal{X}^{\otimes \rightarrow \&}$ represents the object obtained by replacing each $\otimes$ in the decomposition of $\mathcal{X}$ with a $\&$. There is furthermore an isomorphism (constructed from the $*-$autonomous structure of $\mathbf{Caus}[\mathbf{C}]$ and fact that for all $A,B$ in $[\mathbf{C}]^1$ then $A \otimes B \cong A \& B$ \footnote{see \cite{Kissinger2019AStructure} for details.}) of type $p^{\mathcal{X}}: \mathcal{X}^{\otimes \rightarrow \&} \rightarrow [\otimes_{i} \mathcal{X}_{i}^{in} , \otimes_{i} \mathcal{X}_{i}^{out} ]$ where $\mathcal{X}_{i}^{in}$ and $\mathcal{X}_{i}^{out}$ are the objects which appear on left and right hand sides of square brackets in $\mathcal{X}$ respectively. We will define by $\overline{\mathbf{M}}(A,B)$ the set of all $ \hat{\phi} \in \mathbf{C}(I,[A,B])$ such that $ \phi  \in  \mathbf{M}(A,B)$.

To define morphisms we say that $S \in \mathbf{M}^{2}( \mathcal{X} ,\mathcal{Y}) $ if and only if for every $\mathcal{E} \in ob(\mathbf{M}^2)$ and for every $\phi \in {\mathbf{C}}(I,\mathcal{X} \& \mathcal{E})  $ such that $p^{\mathcal{X} \& \mathcal{E}} \circ q^{\mathcal{X} \& \mathcal{E}} \circ (\phi) \in \overline{\mathbf{M}}(\otimes_i \mathcal{X}_i^{in} \otimes_{k} \mathcal{E}_{k}^{in} , \otimes_i \mathcal{X}_i^{out} \otimes_{k} \mathcal{E}_{k}^{out}    )$ then $p^{\mathcal{Y} \& \mathcal{E}} \circ q^{\mathcal{Y} \& \mathcal{E}} \circ (S \& I_{\mathcal{E}}) \circ \phi \in \overline{\mathbf{M}} (\otimes_j \mathcal{Y}_j^{in} \otimes_{k} \mathcal{E}_{k}^{in} , \otimes_j \mathcal{Y}_j^{out} \otimes_{k} \mathcal{E}_{k}^{out} )$. The category $\mathbf{M}^{2}$ inherits it's sequential composition and identity morphisms directly from $[\mathbf{C}]$, it then inherits both monoidal structures from $[\mathbf{C}]$. Indeed for inheritance of $\&$, consider  any pair $S_j :\overline{\mathbf{M}} (\mathcal{X}_j , \mathcal{Y}_j )$, one can confirm that $S_1 \& S_2 \in \mathcal{M}^2 (\mathcal{X}_1 \& \mathcal{X}_2 , \mathcal{Y}_1 \& \mathcal{Y}_2 )$, to check this consider that $p^{\mathcal{X}_1 \& \mathcal{X}_2 \& \mathcal{E}} \circ q^{\mathcal{X}_1 \& \mathcal{X}_2 \& \mathcal{E}} \circ (S_1 \& S_2 \& I_{\mathcal{E}}) \circ \phi = p^{\mathcal{X}_1 \& \mathcal{X}_2 \& \mathcal{E}} q^{\mathcal{X}_1 \& \mathcal{X}_2 \& \mathcal{E}} \circ (\textrm{Swap} \& I_{\mathcal{E}})  \circ  (S_2 \& I \& I_{\mathcal{E}}) \circ (\textrm{Swap} \& I_{\mathcal{E}})    \circ  ( S_1 \& I \& I_{\mathcal{E}}) \circ \phi$ and so all that is needed is to check that for any $\mathcal{X}$,$\mathcal{Y}$ then $\textrm{Swap}_{\mathcal{X}, \mathcal{Y}} \in \mathbf{M}^2  (   \mathcal{X} \& \mathcal{Y}   , \mathcal{Y} \& \mathcal{X} )$ which is immediate. Similarly for inheritance of $\otimes$, the same reasoning can be applied after noting that since $p^{(\mathcal{X}_1 \otimes \mathcal{X}_2) \& \mathcal{E}} = p^{(\mathcal{X}_1 \& \mathcal{X}_2) \& \mathcal{E}}$ and $q^{(\mathcal{X}_1 \otimes \mathcal{X}_2) \& \mathcal{E}} = q^{(\mathcal{X}_1 \otimes \mathcal{X}_2) } \& I_{\mathcal{E}}$ then by naturality of $q$ we have $p^{(\mathcal{X}_1 \otimes \mathcal{X}_2) \& \mathcal{E}} \circ q^{(\mathcal{X}_1 \otimes \mathcal{X}_2) \& \mathcal{E}} \circ ((S_1 \otimes S_2) \& I_{\mathcal{E}}) \circ \phi = p^{(\mathcal{X}_1 \& \mathcal{X}_2) \& \mathcal{E}} \circ q^{(\mathcal{X}_1 \& \mathcal{X}_2) \& \mathcal{E}}  \circ ((S_1 \& S_2) \& I_{\mathcal{E}}) \circ (q^{\mathcal{X}_1 \otimes \mathcal{X}_2 } \& I_{\mathcal{E}}) \circ \phi $,. Note that it is clear by definition that $q^{\mathcal{X}_1 \otimes \mathcal{X}_2} \in \mathbf{M}^2 (\mathcal{X}_1 \otimes \mathcal{X}_2 , \mathcal{X}_1 \& \mathcal{X}_2)$.

Finally we are ready to check that the enrichment structure of $[\mathbf{C}]$ is inherited into $\mathbf{M}^2$. We must check that $\bigcirc_{ABC} \in \mathbf{M}^2 ([A,B] \otimes [B,C] , [A,C])$ (that $\otimes_{ABA'B'} \in \mathbf{M}^2 ([A,A'] \otimes [B,B'] , [A \otimes B , A' \otimes B'])$ is immediate since $\otimes_{ABA'B'}$ is exactly $p$). First, note that for every type $([A,B] \otimes [B,C]) \& \mathcal{E}$ with $\mathcal{E} \in ob(\overline{\mathbf{M}})$ there is an embedding into $([A,B] \otimes [B,C]) \& [E,E']$ for some objects $E,E'$, this entails that every process of the former type embeds as a tripartite process satisfying \[ \forall \rho: \quad \quad  \tikzfig{figs/subcat_rho_1} , \] and so satisfies  \[   \tikzfig{figs/subcat_1} .  \]  Using the premise of this example then gives: \[    \tikzfig{figs/subcat_2} ,  \] with $L,R$ in $\mathbf{M}$. Finally, noting that the sequential composition morphism of $[\mathbf{C}]$ is constructed from the underlying compact closed structure of $\mathbf{C}$, we find that the sequential composition morphism is a morphism of $\mathbf{M}^{2}$, since it sends morphisms of $\mathbf{M}$ to morphisms of $\mathbf{M}$ in the following sense: \[ \tikzfig{figs/subcat_3}. \] An important special case of subcategories captured by this construction is given by taking $\mathbf{C} := \mathbf{CPM}[\mathbf{FHilb}]$ and taking $\mathbf{M}^1 \subseteq [\mathbf{C}]^1$ to be the category of unitary channels. In this case $\mathbf{M}^2$ is a monoidal category of unitary supermaps, such supermaps have previously been defined and are of particular interest in quantum causal modelling \cite{BarrettCyclicModels}. The monoidal and enrichment structure whilst natural, had not yet been defined to the authors knowledge. 
\end{example}

\begin{example}[Idempotent completion preserves enriched monoidal categories]
Any enriched monoidal category can be completed to include idempotents as types, meaning that decoherences in lower order theories can be inherited to construct a higher order theory with classical types. For any $\mathbf{V}$-smc $\mathbf{C}$ the idempotent completions $\mathcal{K}(\mathbf{C})$ of $\mathbf{C}$ and $\mathcal{K}(\mathbf{V})$ of $\mathbf{V}$ define a $\mathcal{K}(\mathbf{V})$-smc $\mathcal{K}(\mathbf{C})$. The idempotent completion of a category has for each object a pair $(X,x)$ of an object $X$ of $\mathbf{C}$ and an idempotent $x:X \rightarrow X$. The enriched structure is given by the functor $[(X,x),(Y,y)] = ([X,Y],[x,y])$ and the corresponding composition morphisms $\bigcirc_{xyz}$ are given by $[x,z] \circ  \bigcirc_{XYZ} \circ ([x,y] \otimes [y,z])$ and similarly for the parallel composition morphisms. All required coherences follow from coherences of the $\mathbf{V}$-smc $\mathbf{C}$. This second example, when applied to the $[\mathbf{C}]^2_{\cong}$-smc $[\mathbf{C}]^1$ with $\mathbf{C} = \mathbf{CPM}[\mathbf{FHilb}]$ of quantum supermaps over completely positive trace preserving quantum channels, produces a theory which includes classical channels in $\mathbf{C}$ as those of type $(X,dec_{X}) \rightarrow (Y,dec_{Y})$.

\end{example}

\begin{example}
For any $\mathbf{V}_{\cong}$-smc $\mathbf{C}$ one can construct the $\mathbf{V}_{\cong}^{comb}$-smc $\mathbf{C}$ which is generated by the structural morphisms $\mathbf{V}(I,[A,B]) \cup \{ \circ_{ABC} \} \cup \{  \otimes_{AA'BB'}  \} \cup \{  \texttt{coherences} \}$ of $\mathbf{V}$, meaning that all parallel composition and sequential composition supermaps are kept along with all states. This category is the category of combs of processes from $\mathbf{C}$, Indeed a comb drawn intuitively as: \[ \tikzfig{figs/comb}    \] can be formally represented as: \[ \tikzfig{figs/comb_2}    \]

Consequently all combs of processes in $\mathbf{C}$ have to exist as processes in $\mathbf{V}$ for any $\mathbf{V}_{\cong}$-smc $\mathbf{C}$.
\end{example}

\subsection{Consequences}
Some basic important consequences of monoidal enrichment are the following: There always exists a family of morphisms which represent \textit{partial insertion} and and a family functions which represent \textit{usage} of the output of a transformation. Before we begin, note that the assignment of an object $[A,B]$ in $\mathbf{V}$ to each pair $(A,B)$ of objects in $\mathbf{C}$ can be extended to an assignment on morphisms \[   [f,g]: = \tikzfig{figs/hom_functor}. \] Altogether this assignment defines a functor $\mathbf{C}^{op} \times \mathbf{C} \rightarrow \mathbf{V}$, meaning that $[f,g] \circ [f' , g'] = [f' \circ f , g \circ g']$ and $[i,i] = i$. 

\paragraph{Partial Insertion:} The partial insertion morphism $\Delta: [A,X] \otimes [Y \otimes X,Z] \rightarrow [Y \otimes A,Z]$ takes a valid sub-input of a process and inserts a pre-processing there, leaving the rest of the inputs unchanged. Formally it is defined by:
\begin{equation}
    \tikzfig{figs/define_partial_1},
\end{equation}
up to unitors and associators, where $\otimes : [Y,Y] \otimes [A,X] \rightarrow [YA,YX]$ and $\circ : [YA,YX] \otimes [YX,Z] \rightarrow [YA,Z]$. The partial insertion can be intuitively understood as representing the following picture:
\begin{equation}
    \tikzfig{figs/axiom_delta}.
\end{equation}
Crucially for $A = I$ then $\Delta$ satisfies:
\begin{equation}
    \tikzfig{figs/define_partial_2}.
\end{equation} Intuitively the above represents the equality between \begin{equation}
    \tikzfig{figs/axiom_delta_2},
\end{equation} and
\begin{equation}
    \tikzfig{figs/axiom_delta_3}.
\end{equation}
\paragraph{Usage:} The usage transformation is a particular natural transformation $\theta:{\mathbf{V}(-,[A,-])} \Longrightarrow  {\mathbf{V}([I,A] \otimes -,[I,-])} $, a family of functions $\theta_{BX}$ given by for each $S:X \rightarrow [A,B]$ taking $\theta_{BX}(S)$ to be:
    \begin{equation}
        \tikzfig{figs/theta_new}.
    \end{equation}
All of the previously stated examples have faithful usage transformations. Intuitively, $\theta$ places $S$ into one of the two holes of a sequential composition supermap: \[ \tikzfig{figs/axiom_usage}.  \] We will find that injectivity of the function $\theta$ is crucial for results about embeddings between layers of higher order theories.
\begin{definition}
A $\mathbf{V}_{\cong}$-smc $\mathbf{C}$ will be called faithful if the usage transformation \[\theta: {\mathbf{V}(-,[A,-])} \Longrightarrow  {\mathbf{V}([I,A] \otimes -,[I,-])}\] is a monomorphism in the functor category $\mathbf{Cat}(\mathbf{V}^{op} \otimes \mathbf{C},\mathbf{Set})$. 
\end{definition}
Faithful usage when present says that two higher order processes $S,T: X \rightarrow [A,B]$ should \textit{only} be distinguishable if they are distinguishable when their outputs are applied to the space of states on $A$. Stated formally faithful usage is the requirement of injectivity, that for all $I,A,B$ the composition process $\bigcirc_{IAB}$ satisfies
    \begin{equation}
        \tikzfig{figs/complete_inj}.
    \end{equation}
This has the additional consequence of entailing that the functor $[I,-]$ be faithful. We will often refer to this functor from the lower order theory to the higher order one as the \textit{raising functor} and give it the notation $\mathcal{R}:\mathbf{C} \rightarrow \mathbf{V}$.


\subsection{Structure preserving maps}
The goal of this section is to develop a way of comparing theories of supermaps, the ready cooked definition of structure preserving map between enriched monoidal categories is that of a $\mathbf{V}$-enriched functor. This definition will however not be fit for our purposes since an enriched monoidal functor is one which allow comparisons of the following type: \[ \begin{tikzcd}
{(\mathbf{V},\mathbf{C})} \arrow[r] & {(\mathbf{V},\mathbf{C}')},
\end{tikzcd}    \] that is, those in which the enriching category is left untouched. Even in the basic case of the inclusion between quantum combs and quantum supermaps, functors which which fix the enriching category are unsuitable, it is after-all the enriching categories that vary in this case: \[\begin{tikzcd}
{([\mathbf{C}]^{2,comb},[\mathbf{C}]^1)} \arrow[d] \\
{([\mathbf{C}]^{2},[\mathbf{C}]^1)}           
\end{tikzcd} . \] Such an inclusion can be identified as an instance of change of base for enriched monoidal categories. Change of base alone however, will be insufficient for our purposes. Consider for instance the inclusion of the theory of unitary combs on the unitaries into the theory of quantum combs on the quantum channels  \[\begin{tikzcd}
{(\mathbf{U}^{2,comb},\mathbf{U}^{1})} \arrow[rd]   & \\
&  {([\mathbf{C}]^{2,comb},[\mathbf{C}]^1)} 
\end{tikzcd} , \] here the inclusion is neither simply an enriched functor nor simply a change of base, it is a combination of the two. 
As a more elaborate example, we will find under additional conditions of faithful usage, that for any sequence in which $\mathbf{C}^3$ enriches $\mathbf{C}^2$ which in turn enriches $\mathbf{C}^1$ then there exists a notion of embedding \[  \begin{tikzcd}
{(\mathbf{C}^2,\mathbf{C}^1)} \arrow[rd] &                               \\
                                         & {(\mathbf{C}^3,\mathbf{C}^2)}
\end{tikzcd} , \] which cannot be understood as purely an enriched functor or a change of base of enrichment. An explicit example is the embedding of the $[\mathbf{C}]^2_{\cong}$-smc $[\mathbf{C}]^1$ into the $[\mathbf{C}]^3_{\cong}$-smc $[\mathbf{C}]^2$ for any pre-causal category $\mathbf{C}$.

To address these problems and provide a suitable notion of structure preserving map, we will construct a notion of functor which allows us to vary both the lower and higher categories at the same time using combinations of (vertical) \textit{change of base} functors and (horizontal) enriched monoidal functors: \[  \begin{tikzcd}
{(\mathbf{C}^2,\mathbf{C}^1)} \arrow[rd] \arrow[d, "\texttt{Change of base}"'] &                               \\
{} \arrow[r, "\texttt{Enriched Functor}"']                                     & {(\mathbf{C}^3,\mathbf{C}^2)}
\end{tikzcd} .\]
This pair of functors along with their expected compatibility, can be viewed as an instance of what we call a pm-functor. The key components of a pm-functor are laid out explicitly here to demonstrate how they incorporate combinations of functors $\mathcal{F}^{V}:\mathbf{V} \rightarrow \mathbf{V}'$ and functors $\mathcal{F}^{C}:\mathbf{C} \rightarrow \mathbf{C}'$ along with compatibility between them. A generalised definition which takes care of coherences can be found in Appendix A, where the pm-functors are identified as arising from applying the Grothendieck construction to the change of base for $\mathbf{V}_{\cong}$ categories. 
\begin{definition}
A pm-functor from a $\mathbf{V}_{\cong}$-smc $\mathbf{C}$ to a $\mathbf{V}'_{\cong}$-smc $\mathbf{C}'$ is:
\begin{itemize}
    \item A symmetric monoidal functor $\mathcal{F}^{V}:\mathbf{V} \rightarrow \mathbf{V}'$
    \item A symmetric monoidal functor $\mathcal{F}^C: \mathbf{C} \rightarrow \mathbf{C}'$
    \item A family of morphisms $\mathcal{F}_{AB}:\mathcal{F}^V [A,B] \rightarrow [\mathcal{F}^CA,\mathcal{F}^C B]$
\end{itemize}
Which together satisfy (in functor box notation \cite{Mellies2006FunctorialDiagrams}): \[   \tikzfig{figs/smsc_monhom_r_nf_simp}  \] where for readability we have treated $\mathcal{F}^{\mathbf{C}}$ and $\mathcal{F}^{\mathbf{V}}$ as if strict, a relaxation for strong monoidal functors is given in Appendix A. \end{definition}
We will refer to a pm-morphism as \textit{fully faithful} if $\mathcal{F}^V,\mathcal{F}^C$ are full and faithful and $\mathcal{F}_{A,B}$ are isomorphisms. Indeed as a consequence we have constructed a notion of sub-enriched monoidal category, as an embedding which provides a faithful pm-morphism. 
\begin{lemma}
The pm-functors form a category $\mathbf{PM}$ with objects given by $\mathbf{V}_{\cong}$-smc's for any $\mathbf{V}$ and morphisms given by pm-functors.
\end{lemma}
\begin{proof}
This can be verified directly, with $(\mathcal{G} \circ \mathcal{F})_{AB} := \mathcal{G}_{\mathcal{F}^C A,\mathcal{F}^C B} \circ \mathcal{G}^V (\mathcal{F}_{AB})$. Alternatively, building on the results of \cite{Cruttwell2008NormedCategories}, objects and morphisms of $\mathbf{PM}$ with the above composition rule can be identified as the objects and morphisms of the Grothendieck construction for the change of base for enriched monoidal categories. This identification is made in Appendix A. 
\end{proof}


\begin{example}
There are pm-functors between the examples of unitary combs, general unitary supermaps, $2^{nd}$-order combs and $2^{nd}$ order transformations built from $[\mathbf{C}]$, as depicted in the preamble to this section. In each case $\mathcal{F}^C$ and $\mathcal{F}^V$ are given by inclusions and each $\mathcal{F}_{AB}$ can as a result be taken as the identity.
\end{example}
Another example with trivial $\mathcal{F}_{AB}$ is given by embedding into the Karoubi envelope.
\begin{example}
There is a pm-functor from any enriched monoidal category to its Karoubi envelope given by the embeddings $\mathbf{C} \rightarrow \mathcal{K}[\mathbf{C}]$ and $\mathbf{V} \rightarrow \mathcal{K}[\mathbf{V}]$. Explicitly we define $\mathcal{F}^{\mathbf{C}}(A) := (A,i_A)$ and simillarly $\mathcal{F}^{\mathbf{V}}(X) = (X,i_X)$ with $\mathcal{F}_{AB}$ taken to be the identity.  \end{example}
An example with non-trivial $\mathcal{F}_{AB}$ is given by considering embeddings between different layers $[\mathbf{C}]^i$ of $[\mathbf{C}]$. Note that $\mathcal{R}(-) := [I,-]$ is a lax monoidal functor with natural transformation $x_{AB}: \mathcal{R}(A) \otimes_{\mathbf{V}} \mathcal{R}(B) \rightarrow \mathcal{R}(A \otimes_{\mathbf{C}} B)$ defined by $x_{AB} := \otimes_{IAIB}$ and with morphism $y:I_{\mathbf{V}} \rightarrow \mathcal{R}(I_{\mathbf{C}})$ define by $y := \kappa(i_{\mathbf{I}_{\mathbf{C}}})$.

\begin{lemma}
Given any $\mathbf{C}^3_{\cong}$-smc $\mathbf{C}^2$ and $\mathbf{C}^2_{\cong}$-smc $\mathbf{C}^1$ there exists a morphism $\Gamma$ between them in $\mathbf{PM}$ given by given by $\Gamma := (\mathcal{R}_{2}^{3}, \mathcal{R}_{1}^{2}, \gamma)$ where $\gamma_{A,B}:\mathcal{R}_{2}^{3}([A,B]^2) \rightarrow [\mathcal{R}_{1}^2(A),\mathcal{R}_1^2 (B)]^3 $ is given by \begin{equation}
    \tikzfig{figs/smsc_monhom_2}
\end{equation}
that is, applying partial insertion in $\mathbf{C}^3$ to the composition map in $\mathbf{C}^2$.
\end{lemma}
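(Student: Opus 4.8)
The plan is to verify directly that the triple $\Gamma := (\mathcal{R}_2^3, \mathcal{R}_1^2, \gamma)$ satisfies the defining data and coherence condition of a morphism of $\mathbf{PM} = \mathcal{G}(\mathcal{M}_{sym})$. Unwinding the Grothendieck construction, a morphism $(\mathbf{C}^2, \mathbf{C}^1) \to (\mathbf{C}^3, \mathbf{C}^2)$ is a pair consisting of a symmetric monoidal functor $\mathcal{F}^V : \mathbf{C}^2 \to \mathbf{C}^3$ (a morphism in the base) together with a $\mathbf{C}^3$-enriched symmetric monoidal functor $\mathcal{M}_{sym}(\mathcal{F}^V)(\mathbf{C}^1) \to \mathbf{C}^2$ (a morphism in the fibre over $\mathbf{C}^3$). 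I would take $\mathcal{F}^V = \mathcal{R}_2^3$ and observe that the change of base $\mathcal{M}_{sym}(\mathcal{R}_2^3)(\mathbf{C}^1)$ has hom-objects $\mathcal{R}_2^3([A,B]^2) = [I,[A,B]^2]^3$, which is exactly the source of $\gamma_{A,B}$; so $\gamma$ is correctly typed as a fibre morphism, and the task splits into (i) showing $\mathcal{R}_2^3$ is symmetric monoidal and (ii) showing that the object assignment $A \mapsto [I,A]^2$ together with $\gamma$ is an enriched symmetric monoidal functor, whose underlying ordinary functor then recovers $\mathcal{F}^C = \mathcal{R}_1^2$.

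For step (i) I would prove the general fact that for any $\mathbf{V}_{\cong}$-smc the raising functor $[I,-]$, already noted above to be a faithful functor, is moreover lax symmetric monoidal; this also covers $\mathcal{R}_1^2$. The laxator is the parallel composition morphism $\otimes_{IIAB} : [I,A] \otimes [I,B] \to [I \otimes I, A \otimes B] \cong [I, A \otimes B]$ and the unit constraint is the state $\kappa(\mathrm{id}_I) : I \to [I,I]$; the associativity, unitality and symmetry coherences of this lax structure follow directly from the corresponding axioms on $\otimes_{ABA'B'}$ together with the interchange law recorded in the enrichment conditions.

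For step (ii) I would first unfold $\gamma_{A,B}$: by its definition it is obtained by raising the composition morphism $\bigcirc^2_{IAB} : [I,A]^2 \otimes [A,B]^2 \to [I,B]^2$ of $\mathbf{C}^2$ to a state $\kappa(\bigcirc^2_{IAB})$ of $\mathbf{C}^3$, feeding this state into the second slot of the partial insertion morphism $\Delta$ of $\mathbf{C}^3$ taken with first argument the unit $I$, and identifying $[I,A]^2 \otimes I \cong [I,A]^2$ by means of the $A = I$ reduction property of $\Delta$. Verifying that $(\mathcal{R}_1^2,\gamma)$ is an enriched symmetric monoidal functor then amounts to three equalities of morphisms in $\mathbf{C}^3$, which are precisely the content of the coherence diagram defining a pm-functor: compatibility of $\gamma$ with enriched composition, with enriched identities, and with the parallel-composition (monoidal) structure.

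I expect the composition-compatibility equation to be the main obstacle. Substituting the definition of $\gamma$ as a partial insertion of a composition map, both the composite $\bigcirc^3 \circ (\gamma_{A,B} \otimes \gamma_{B,C})$ and $\gamma_{A,C} \circ \mathcal{R}_2^3(\bigcirc^2_{ABC})$ become diagrams in $\mathbf{C}^3$ built from the sequential composition morphisms of $\mathbf{C}^2$ and of $\mathbf{C}^3$ and from $\Delta$; the goal is to rewrite both to a common form using the associativity and unitality axioms for $\bigcirc^2$, the $A = I$ reduction property of $\Delta$, and the interchange law. Conceptually this is just the statement that inserting a state on $A$ through $f : A \to B$ and then through $g : B \to C$ equals inserting it through the $\mathbf{C}^1$-composite of $f$ and $g$, i.e. associativity of the enrichment of $\mathbf{C}^1$ in $\mathbf{C}^2$; crucially the argument must be carried out at the level of morphism equalities, since faithfulness is not assumed here. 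The identity and monoidal checks are lighter variants using unitality of $\bigcirc$ and the interchange law respectively, and I expect the genuine difficulty to lie only in the bookkeeping of the associators, unitors and enriched-functor coherences that are suppressed as strict in the stated diagram.
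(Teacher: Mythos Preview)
Your proposal is correct and follows essentially the same route as the paper: both proofs verify directly the three pm-functor coherence conditions (preservation of states/identities, of sequential composition, and of parallel composition), reducing each to the associativity, unitality and interchange axioms of the enriched monoidal categories $\mathbf{C}^2$ and $\mathbf{C}^3$, with the paper working strictly and relegating associator/unitor bookkeeping just as you anticipate. Your step (i) makes explicit the lax symmetric monoidal structure on the raising functors that the paper silently assumes; the one inessential slip is the aside that $[I,-]$ was ``already noted to be faithful'', which in the paper is only a consequence of the separate faithful-usage hypothesis and plays no role in this lemma.
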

Note that we have used the notation $[-,=]^i$ to denote the hom functor associated to the $\mathbf{C}^{i}_{\cong}$-smc $\mathbf{C}^{i-1}$. 
\begin{proof}
Given in Appendix B.
\end{proof}
Note that here $\bigcirc_{IAB}: [I,A] \otimes_{2} [A,B] \rightarrow [I,B]$ so that $\kappa(\bigcirc_{IAB}):I \rightarrow [[I,A] \otimes_{2} [A,B]]$ and we omit unitors so that $\Delta$ has type $\Delta: [I,[A,B]] \otimes_3 [[I,A] \otimes_2 [A,B], [I,B]] \rightarrow [[I,A] \otimes_{2} I,[I,B]] \cong [[I,A],[I,B]]$.

\begin{corollary}
There is a morphism in $\mathbf{PM}$ from the $[\mathbf{C}]^{i+1}_{\cong}$-smc $[\mathbf{C}]^{i}$ to the $[\mathbf{C}]^{i+2}_{\cong}$-smc $[\mathbf{C}]^{i+1}$ for every $i$. 
\end{corollary}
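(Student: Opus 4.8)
The plan is to obtain this statement as a direct instantiation of the preceding Lemma, whose hypotheses are already met at every level of the tower by the Example on Higher Order Causal Categories. First I would recall from that Example that for each $i$ there is a $[\mathbf{C}]^{i+1}_{\cong}$-smc $[\mathbf{C}]^{i}$; applying this at two consecutive indices gives simultaneously a $[\mathbf{C}]^{i+1}_{\cong}$-smc $[\mathbf{C}]^{i}$ and a $[\mathbf{C}]^{i+2}_{\cong}$-smc $[\mathbf{C}]^{i+1}$. The crucial point is that the middle category $[\mathbf{C}]^{i+1}$ plays both roles at once: it is the enriched category of the first and the enriching category of the second, so the two enrichments form exactly the chain ``$\mathbf{C}^3$ enriches $\mathbf{C}^2$ enriches $\mathbf{C}^1$'' demanded by the Lemma. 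Each of the categories $[\mathbf{C}]^{j}$ is moreover a genuine SMC with unit $I_{j}$, so all three sit in the required setting.

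Next I would set $\mathbf{C}^1 := [\mathbf{C}]^{i}$, $\mathbf{C}^2 := [\mathbf{C}]^{i+1}$, and $\mathbf{C}^3 := [\mathbf{C}]^{i+2}$, and simply invoke the Lemma. This produces a morphism $\Gamma := (\mathcal{R}_{i+1}^{i+2}, \mathcal{R}_{i}^{i+1}, \gamma)$ of $\mathbf{PM}$, where the two raising functors are the instances of $[I,-]$ between consecutive layers and $\gamma$ is the partial insertion $\Delta$ in $[\mathbf{C}]^{i+2}$ applied to the sequential composition map $\bigcirc_{IAB}$ of $[\mathbf{C}]^{i+1}$, exactly as in the Lemma's displayed formula. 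Since the Lemma already establishes every pm-functor coherence from the coherences of $\mathbf{C}^3$ and $\mathbf{C}^2$, nothing further needs to be verified once these identifications are made.

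For the $[\mathbf{C}_{\leq}]$ case I would repeat the argument verbatim: the same Example records that the enrichments giving a $[\mathbf{C}_{\leq}]^{i+1}_{\cong}$-smc $[\mathbf{C}_{\leq}]^{i}$ exist for every $i$, so applying the Lemma with $\mathbf{C}^{1} := [\mathbf{C}_{\leq}]^{i}$, $\mathbf{C}^{2} := [\mathbf{C}_{\leq}]^{i+1}$, $\mathbf{C}^{3} := [\mathbf{C}_{\leq}]^{i+2}$ yields the corresponding $\Gamma$.

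I do not anticipate a substantive obstacle, since the corollary is purely an instantiation rather than a new argument. The only point requiring care is bookkeeping: checking that the raising functor $\mathcal{R}_{i}^{i+1} = [I,-]$ is defined at each level, and that the tensor products and units agree across neighbouring layers so that $\Delta$ and $\bigcirc_{IAB}$ carry the precise types spelled out in the remark following the Lemma. Both are guaranteed by the inductive construction of the layers $[\mathbf{C}]^{i}$ (respectively $[\mathbf{C}_{\leq}]^{i}$), so this last verification is routine.
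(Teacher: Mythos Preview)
Your proposal is correct and matches the paper's approach: the corollary is stated without proof in the paper precisely because it is an immediate instantiation of Lemma 1 with $\mathbf{C}^1 := [\mathbf{C}]^{i}$, $\mathbf{C}^2 := [\mathbf{C}]^{i+1}$, $\mathbf{C}^3 := [\mathbf{C}]^{i+2}$ (and likewise for $[\mathbf{C}_{\leq}]$), using the Example on Higher Order Causal Categories to supply the required enrichments. The explicit description of $\Gamma$ via the raising functors and $\gamma$ that you give is exactly the one the Lemma provides.
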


\section{Self-Contained Higher Order Theories}
Current examples of higher-order quantum-like theories have a further common feature of self-containment. We mean by this the idea that all processes, no matter their higher or lower order status, 
exist in the same theory, that is, the same category. Explicit examples of theories which are self-contained in this sense are higher order quantum theory and higher order causal categories both of which are conveniently closed monoidal categories. Those familiar with classical category theory may imagine that closed monoidal structure is the appropriate mathematical formalisation of the idea of self-containment. In this section rather than taking as an axiom that currying (the key feature of closed monoidal categories) \textit{is} the appropriate formalisation of a self-contained theory of higher order transformations, we will treat this as a statement to be proven, by combining enrichment with a more relaxed formalisation of self-containment. The three operational features which we will use to derive currying and closed monoidal structure will at the intuitive level be the following: 
\begin{itemize}
    \item All processes in $\mathbf{C}$ have higher order representations \textit{in} $\mathbf{C}$ (Self-enrichment).
    \item There is an equivalence $A \cong [I,A]$ between $A$ and the higher order system $[I,A]$ representing the states of $A$ (linking).
    \item The usage transformation is faithful.
\end{itemize}
Conceptually, the first condition models the assumption that it is the same agents that can perform processes, super-processes, and so on. We model this with the notion of a $\mathbf{C}_{\cong}$-smc Category $\mathbf{C}$. The second condition is captured by the following:
\begin{definition}
A {\em Linked Monoidal Category} is a $\mathbf{C}_{\cong}$-smc $\mathbf{C}$ equipped with a monoidal natural isomorphism $\eta_A: A \rightarrow [I,A]$. 
\end{definition}
We furthermore say that a linked monoidal category $\mathbf{C}$ is faithful if it is faithful as a $\mathbf{C}_{\cong}$-smc $\mathbf{C}$. Intuitively, linked categories have enough structure to define canonical evaluation morphisms (the structural feature of closed monoidal categories) $\texttt{eval}_{AB}:A \otimes [A,B] \rightarrow B$ which apply processes to lower order objects, by using link-morphisms and sequential composition morphisms: \[  \tikzfig{figs/axiom_eval} .  \] Intuitively in the above diagram the available inputs are the bottom wire $A$ and the dotted process input of type $[A,B]$, the output wire is the top wire of type $B$. Indeed linked categories will turn out to be closed monoidal if they satisfy one additional condition - that they have faithful usage, this is proven by constructing evaluation morphisms in the above way.
\begin{lemma}
A category $\mathbf{C}$ is closed smc if and only if $\mathbf{C}$ is a faithful linked monoidal category.
\end{lemma}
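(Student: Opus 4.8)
The plan is to prove both implications, with essentially all of the content lying in the direction that a faithful linked monoidal category $\mathbf{C}$ is closed symmetric monoidal; the converse is the standard observation that a closed symmetric monoidal category enriches over itself. For the forward direction I would fix objects $A,B$ and the internal hom $[A,B]$ supplied by the self-enrichment, and aim to exhibit a co-universal arrow by constructing the evaluation morphism and then verifying that precomposition with it induces a bijection $\mathbf{C}(A\otimes C,B)\cong\mathbf{C}(C,[A,B])$. Concretely I would define
\begin{equation}
\texttt{eval}_{A,B} := \eta_B^{-1}\circ\bigcirc_{IAB}\circ(\eta_A\otimes 1_{[A,B]}),
\end{equation}
which is exactly the picture in \texttt{figs/axiom\_eval}: the links $\eta_A,\eta_B$ convert between first- and second-order systems, and $\bigcirc_{IAB}:[I,A]\otimes[A,B]\to[I,B]$ performs the sequential composition. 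Writing $\Psi(g):=\texttt{eval}_{A,B}\circ(1_A\otimes g)$ for $g:C\to[A,B]$, the whole problem reduces to showing $\Psi$ is a bijection, naturally in $A,B,C$.

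For \emph{uniqueness} (injectivity of $\Psi$) I would use faithful usage. Since $\eta_A\otimes 1_{[A,B]}$ and $1_A\otimes g$ act on disjoint factors, $(\eta_A\otimes 1_{[A,B]})\circ(1_A\otimes g)=(1_{[I,A]}\otimes g)\circ(\eta_A\otimes 1_C)$, and by the definition of the usage transformation $\theta$ one has $\bigcirc_{IAB}\circ(1_{[I,A]}\otimes g)=\theta_{BC}(g)$. Hence
\begin{equation}
\Psi(g)=\eta_B^{-1}\circ\theta_{BC}(g)\circ(\eta_A\otimes 1_C).
\end{equation}
Because $\eta_A,\eta_B$ are isomorphisms, $\Psi(g)=\Psi(g')$ forces $\theta_{BC}(g)=\theta_{BC}(g')$, and faithfulness (that $\theta$ is a monomorphism, i.e.\ componentwise injective) then gives $g=g'$. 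This disposes of uniqueness cheaply and shows precisely where the faithfulness hypothesis is consumed.

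For \emph{existence} (surjectivity) I would, given $f:A\otimes C\to B$, curry it using partial insertion in the slot $A'=I$. Taking $\Delta:[I,C]\otimes[A\otimes C,B]\to[A\otimes I,B]\cong[A,B]$ and the state $\kappa(f):I\to[A\otimes C,B]$, I set
\begin{equation}
\hat f := \Delta\circ(\eta_C\otimes\kappa(f)),
\end{equation}
a morphism $C\to[A,B]$ (absorbing unitors). The claim to verify is $\Psi(\hat f)=f$, and this is the \textbf{main obstacle}: it is a diagrammatic computation that threads $\hat f$ through $\texttt{eval}$, unfolds $\Delta$ into its defining $\otimes$- and $\bigcirc$-composites, and then collapses the resulting diagram using the special $A=I$ identity for $\Delta$ recorded in \texttt{figs/define\_partial\_2}, together with the interchange law, associativity/unitality of $\bigcirc$, and the monoidal naturality of $\eta$. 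The link isomorphisms must cancel against the $[I,-]$-legs introduced by $\Delta$ and $\bigcirc_{IAB}$, leaving exactly $f$; I expect the $A=I$ partial-insertion identity to be the load-bearing step, since it is what guarantees the inserted process lands in the correct slot without disturbing the open $A$-wire.

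Finally I would check that the bijection $\Psi$ is natural in $A,B,C$ — inherited from naturality of $\kappa$, $\eta$, $\Delta$, and $\bigcirc$ — so that $\texttt{eval}_{A,B}$ is a genuine co-universal arrow and $\mathbf{C}$ is closed symmetric monoidal. For the converse I would argue that a closed symmetric monoidal $\mathbf{C}$ is a $\mathbf{C}_{\cong}$-smc via the internal hom (with $\kappa$ the currying bijection at $C=I$ and parallel composition the canonical map $[A,A']\otimes[B,B']\to[A\otimes B,A'\otimes B']$), that $[I,A]\cong A$ naturally and monoidally by $\mathbf{C}(C,[I,A])\cong\mathbf{C}(I\otimes C,A)\cong\mathbf{C}(C,A)$ and Yoneda, giving the link $\eta$, and that faithful usage holds because $\theta_{BC}$ is, under these bijections, the currying isomorphism and hence injective. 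This direction should be routine once the dictionary between closed structure and the enriched data is spelled out.
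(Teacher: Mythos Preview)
Your proposal is correct and follows essentially the same approach as the paper: the same construction of $\texttt{eval}$ from $\eta$ and $\bigcirc_{IAB}$, the same use of partial insertion composed with the link and $\kappa(f)$ to build $\hat f$, uniqueness via faithful usage of $\bigcirc$, and the converse via the standard self-enrichment of a closed smc with the link coming from the inverse of $\texttt{eval}_{I,A}$. Your argument is in fact slightly more explicit than the paper's in isolating $\Psi(g)=\eta_B^{-1}\circ\theta_{BC}(g)\circ(\eta_A\otimes 1_C)$ to make the role of the faithfulness hypothesis transparent; the paper simply asserts that complete injectivity of $\bigcirc$ transfers to $\texttt{eval}$, and absorbs $\eta_C$ into a modified partial-insertion map rather than keeping them separate as you do.
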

\begin{proof}
A full proof is given in Appendix C, here we show for reference the formal construction of evaluations analogous to the above intuitive picture:
\begin{equation}
    \texttt{eval}  \quad : =  \quad \tikzfig{figs/main_thm_1_a}
\end{equation}
the requirement of being faithful ensures uniqueness/co-universality. 
\end{proof}
In this section we considered theories which are from the start assumed to be self contained, in the next we generalise this result to infinite towers of enriched monoidal categories, using along the way the developed notion of structure preserving map as $pm$-functor. Given that linked faithful categories are exactly closed symmetric monoidal categories they give an alternative way to view a familiar categorical structure. For instance the closed monoidal structure of $\mathbf{Set}$ can be viewed as a consequence of the fact that (i) $\mathbf{Set}$ is monoidal (ii) Trivially $\mathbf{Set}$ is monoidally enriched in $\mathbf{Set}$ (iii) The bijection $\kappa$ for enrichment provides a function $\eta : A \rightarrow [I,A]$ which is monoidal (iv) The composition function is faithful. As noted in the preliminary section $[\mathbf{C}]$ has closed monoidal structure for any $\mathbf{C}$.


\section{Towers of Higher Order Theories}
A heavy assumption used in the results of the previous section, is the idea of self-containment. In this section we will relax this assumption and show that theories resulting from the gluing together of a suitably well-behaved tower of higher order theories, are again closed monoidal. The essence of the proof will be that of the previous section, the formal tools used will be the notions of enriched monoidal category along with properties of previously constructed pm-functors (structure preserving maps) within towers of enriched categories. We begin by presenting the notion of a tower of theories over a base theory, each a theory of supermaps over the theory that precedes it. The base theory represents a given physical theory, such as quantum or classical probability theory. The second layer represents a theory of supermaps, the third layer a theory of super-supermaps, and so on:
\begin{equation}
    \tikzfig{figs/hopt_3}
\end{equation}
The ultimate goal of introducing this construction is for the specification of a unified higher order theory, into which such a sequence will embed. Mathematically, a hierarchy of higher order physical processes is represented by an {\em ascending sequence of enriched monoidal Categories}.
\begin{definition}
An {\em ascending sequence of enriched monoidal Categories} $\mathbf{C}^1 ,  \mathbf{C}^2 ,  \dots  ,  \mathbf{C}^N$ is a specification for each $i \in [\mathbf{N-1}]$ of a $\mathbf{C}^{i + 1}_{\cong}$-smc $\mathbf{C}^i$.
\end{definition}
In any such sequence, the category ${\cal C}_i$ is  ``encoded'' into the higher level ${\cal C}_{i+1}$  by the monoidal \textit{raising} functor $\mathcal{R}_{i}^{i+1}(-):= [I_{i},-]^{i+1}$. It will be convenient to define the following compact notation for the induced encoding (full faithful braided monoidal functor) from level $i$ to level $j>i$:
\[\mathcal{R}_i^j: \mathbf{C}^i \longrightarrow \mathbf{C}^j \quad \quad \mathcal{R}_i^j := \mathcal{R}_{j-1}^{j} \circ \mathcal{R}_i^{j-1}. \]

The agents inhabiting layer $\mathbf{C}_{j}$ are strictly more powerful than the inhabitants of $\mathbf{C}_{i < j}$, in the sense that each $\mathbf{C}_{i<j}$ may be embedded into $\mathbf{C}_{j}$. For any finite sequence $\mathbf{C}^{i}$ of length $N$, the final category $\mathbf{C}_{n}$ may be seen as the arena in which agents may manipulate processes from any category in the preceding sequence. In fact as observed in section $3$ the embeddings can be run in parallel and phrased as pm-morphisms.
\begin{lemma}
Let $\mathbf{C}^1 ,  \mathbf{C}^2,   \dots  $ be a {\em an ascending sequence of monoidal enriched categories}: then for every $0<i<N-1$ there exists a morphism $ \Gamma_i $ from the $\mathbf{C}^{i+1}$-smc $\mathbf{C}^i$ to the $\mathbf{C}^{i+2}_{\cong}$-smc $\mathbf{C}^{i+1}$ in $\mathbf{PM}$.
\end{lemma}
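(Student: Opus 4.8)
The plan is to obtain each $\Gamma_i$ as a direct specialisation of the Lemma already proved above, the one producing a pm-morphism $\Gamma := (\mathcal{R}_2^3, \mathcal{R}_1^2, \gamma)$ from a $\mathbf{C}^2_{\cong}$-smc $\mathbf{C}^1$ together with a $\mathbf{C}^3_{\cong}$-smc $\mathbf{C}^2$, applied to the three consecutive layers $\mathbf{C}^i, \mathbf{C}^{i+1}, \mathbf{C}^{i+2}$ extracted from the ascending sequence. First I would fix $i$ with $0<i<N-1$, so that $i+2 \leq N$ and all three categories genuinely appear in the sequence. By the definition of an ascending sequence of enriched monoidal categories, the $i$-th datum is a $\mathbf{C}^{i+1}_{\cong}$-smc $\mathbf{C}^i$ and the $(i+1)$-th datum is a $\mathbf{C}^{i+2}_{\cong}$-smc $\mathbf{C}^{i+1}$. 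Under the relabelling $(1,2,3)\mapsto(i,i+1,i+2)$ these are exactly the hypotheses of the earlier Lemma.

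Applying that Lemma then yields the desired pm-morphism
\[ \Gamma_i := (\mathcal{R}_{i+1}^{i+2},\, \mathcal{R}_i^{i+1},\, \gamma_i), \]
whose lower and higher components are the raising functors $\mathcal{R}_i^{i+1} = [I_i,-]^{i+1}$ and $\mathcal{R}_{i+1}^{i+2} = [I_{i+1},-]^{i+2}$, and whose comparison cells $\gamma_i$, of type $\mathcal{R}_{i+1}^{i+2}([A,B]^{i+1}) \to [\mathcal{R}_i^{i+1}A, \mathcal{R}_i^{i+1}B]^{i+2}$, are built exactly as in the earlier construction: by applying the partial insertion $\Delta$ in $\mathbf{C}^{i+2}$ to the sequential composition map $\bigcirc_{IAB}$ of $\mathbf{C}^{i+1}$. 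This gives a morphism in $\mathbf{PM}$ from the $\mathbf{C}^{i+1}$-smc $\mathbf{C}^i$ to the $\mathbf{C}^{i+2}_{\cong}$-smc $\mathbf{C}^{i+1}$, as required.

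Since the earlier Lemma already discharges all the pm-functor coherences (the explicit verification being deferred to the appendix) purely from the coherences of the two enrichments involved, nothing further is needed here: the same verification goes through after relabelling. I therefore expect no genuine mathematical obstacle. The only point requiring care is bookkeeping — confirming that the index range $0<i<N-1$ guarantees the third layer $\mathbf{C}^{i+2}$ is present in the sequence, and that the composite raising functors are named consistently with the notation $\mathcal{R}_i^j := \mathcal{R}_{j-1}^j \circ \mathcal{R}_i^{j-1}$ so that $\Gamma_i$ is stated with the correct domain and codomain layers. As a consistency check, taking $\mathbf{C}^j = [\mathbf{C}]^j$ (respectively $[\mathbf{C}_{\leq}]^j$) recovers precisely the Corollary recorded earlier.
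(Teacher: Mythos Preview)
Your proposal is correct and matches the paper's own proof exactly: the paper simply states that the result is a direct consequence of the earlier Lemma (the one producing $\Gamma = (\mathcal{R}_2^3, \mathcal{R}_1^2, \gamma)$), applied with the relabelling $(1,2,3)\mapsto(i,i+1,i+2)$. Your additional bookkeeping remarks about the index range and the explicit description of $\Gamma_i$ are accurate and only make the argument more explicit than the paper bothers to be.
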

\begin{proof}
Direct consequence of lemma $1$.
\end{proof}
We will find that closed monoidal structure arises when the $\Gamma_i$ are fully faithful.
\begin{definition}
An ascending sequence of monoidal enrichments is fully faithful if each $\mathbf{C}^{i+1}_{\cong}$-smc $\mathbf{C}^i$ is faithful and each $\Gamma_i$ is fully faithful.
\end{definition}
Note that for each $[\mathbf{C}]$ the sequence $[\mathbf{C}]^{(1)}, [\mathbf{C}]^{2}  \dots$ is fully faithful with the inverse to each $\gamma:[I,[A,B]] \rightarrow [[I,A],[I,B]]$ given by $\eta_{[A,B]} \circ [\eta_A,\eta^{-1}_{B}]: [[I,A],[I,B]] \rightarrow [I,[A,B]]$ where $\eta_{X}: X \rightarrow [I,X]$ is the currying of the identity\footnote{In the non-strict case $\eta$ is the currying of the unitor.}. We will discover that when fully coherent sequences surjectively embed into a symmetric monoidal category, a merger for the sequence, it is guaranteed that the merger will in turn be closed monoidal.

\paragraph{Theories consisting of fully faithful sequences are closed monoidal:}
We now present operational conditions on a theory $\mathbf{C}$ in terms of an embedded ascending sequence $\mathbf{C}^{(i)}$ which will lead to closed monoidal structure for $\mathbf{C}$. Intuitively the conditions are the following \begin{itemize}
    \item $\mathbf{C}$ contains nothing more and nothing less than a fully faithful sequence $\mathbf{C}^{i}$ of enriched monoidal categories.
    \item $\mathbf{C}$ provides links between the layers of $\mathbf{C}^{i}$.
\end{itemize} Such an embedding for a generic sequence $\mathbf{C}^{(i)}$ is captured categorically by a sequence of full and faithful functors $\mathcal{F}_{i}: \mathbf{C}^{(i)} \rightarrow \mathbf{C}$. To capture that there is \textit{nothing more} in $\mathbf{C}$ we require a further condition of essential surjectivity on the union (co-product) functor \[\coprod_{i} \mathcal{F}_i : \mathbf{C}^{(i)} \rightarrow \mathbf{C} .\] 
Finally we impose the condition that there be a link between layers of the theory. The most basic notion of a linking between levels is via an isomorphism $A \cong [I,A]$. Formally this equivalence when consistent with the monoidal embeddings, is captured by the existence of a monoidal natural isomorphism 
\begin{equation} \eta_{i-1}^{i} : \mathcal{F}_{i-1}(-)  \longrightarrow \mathcal{F}_{i} \circ \mathcal{R}_{i-1}^{i} (-)  \, ,\end{equation}
In short, $\eta_{i-1}^{i}$ provides a witness for the equivalence between $A$ and $[I,A]$ inside $\mathbf{C}$. For ease of notation we will denote the inverse $(\eta^{i}_{i-1})^{-1}$ by $\eta^{i-1}_{i}$ when needed. The existence of a natural isomorphism $\eta_{i-1}^{i}$ for each $i$ is can be concisely phrased in the language of $2$-Categories, it is precisely the requirement that $\mathbf{C}$ be a $2$-Cone in the $2$-Category $\mathbf{ffSymCat}$ of 
\begin{itemize}
    \item Symmetric monoidal categories,
    \item Full and faithful symmetric lax monoidal functors,
    \item Monoidal natural transformations.
\end{itemize}
For a diagram $\mathcal{D}$ in $\mathbf{ffSymCat}$ given by a fully faithful sequence of enriched monoidal categories, and the monoidal functors $\mathcal{R}_{i}^{i+1}: \mathbf{C}^i \longrightarrow \mathbf{C}^{i + 1}$ between them, a cone over $\mathcal{D}$ is an ``apex" category $\mathbf{C}$ equipped with a family of functors $\mathcal{F}_i:\mathbf{C}^i \rightarrow \mathbf{C}$ such that each of the following triangles commutes up to a monoidal natural isomorphism $\eta_{i}^{i+1}$: 
\[\begin{tikzcd}
\mathbf{C}^i \arrow[rr, "{\mathcal{R}_{i}^{i+1}}"] \arrow[rrdd, "\mathcal{F}_{i}"'] &                                               & \mathbf{C}^{i + 1} \arrow[dd, "\mathcal{F}_{i+1}"] \\
                                                                         & {} \arrow[ru, "\eta" description, Rightarrow] &                                                   \\
                                                                         &                                               & \mathbf{C}                                      
\end{tikzcd} .  \]
The above discussion culminates in the following definition, that of a \textit{Merger}.
\begin{definition}
A Merger for a fully faithful ascending sequence of enriched monoidal categories (``merger" for short) $\mathbf{C}^{(i)}$ is a $2$-Cone $(\mathcal{F}_i: \mathbf{C}^i \longrightarrow \mathbf{C})$ over the diagram \[\begin{tikzcd}
\dots & \mathbf{C}_{i-1} \arrow[rr, "{\mathcal{R}_{i-1}^{i}}"] & & \mathbf{C}^i \arrow[rr, "{\mathcal{R}_i^{i+1}}"]  &                                               & \mathbf{C}^{i + 1} & \dots
\end{tikzcd}\] in $\mathbf{ffSymCat}$ such that
\begin{itemize}
    \item $\coprod_{i} \mathcal{F}_i$ is essentially surjective
\end{itemize}
A Merger is furthermore termed ``$N$-th order" if the sequence has length $N$.
\end{definition}

For any sequence of finite order the notion of a merger is essentially trivial, given a sequence of order $N$ one can simply construct a cone of the above type by taking $\mathbf{C} = \mathbf{C}^N$ and taking $\mathcal{F}_{i}:= \mathcal{R}_{i}^{N}$. The primary technical contribution of this manuscript is the observation that the apex of any $\infty$-Order Merger possesses a simple categorical property, it must be a closed monoidal category. 
\begin{theorem}
The apex $\mathbf{C}$ of any Merger of infinite order is a closed symmetric monoidal category.
\end{theorem}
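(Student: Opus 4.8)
The plan is to reduce the theorem to the characterization lemma of the previous section, which asserts that a category is a closed symmetric monoidal category if and only if it is a faithful linked monoidal category. It therefore suffices to equip the apex $\mathbf{C}$ with the structure of a $\mathbf{C}_{\cong}$-smc $\mathbf{C}$ (self-enrichment), a monoidal natural isomorphism $\eta_X : X \to [I,X]$ (linking), and faithful usage. That $\mathbf{C}$ is symmetric monoidal is immediate, since it is the apex of a $2$-cone in $\mathbf{ffSymCat}$, whose objects are symmetric monoidal categories.

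First I would construct the self-enrichment. Since $\coprod_i \mathcal{F}_i$ is essentially surjective and each $\mathcal{F}_i$ is full and faithful, every object of $\mathbf{C}$ is isomorphic to $\mathcal{F}_i(A)$ for some level $i$ and some $A \in \mathbf{C}^i$; using the raising functors together with the linking isomorphisms $\eta$, any finite collection of objects can be represented at a common level. For $X \cong \mathcal{F}_i(A)$ and $Y \cong \mathcal{F}_i(B)$ I would define the internal hom by $[X,Y] := \mathcal{F}_{i+1}([A,B]^{i+1})$, where $[A,B]^{i+1}$ is the hom-object of the $\mathbf{C}^{i+1}_{\cong}$-smc $\mathbf{C}^i$, and I would transport the sequential and parallel composition morphisms $\bigcirc$ and $\otimes$, together with all coherence data, from $\mathbf{C}^{i+1}$ along the full and faithful functor $\mathcal{F}_{i+1}$.

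The decisive point is that this assignment is independent of the chosen level, and here full faithfulness of the maps $\Gamma_i$ enters. The isomorphism $\gamma_{A,B} : \mathcal{R}_{i+1}^{i+2}([A,B]^{i+1}) \to [\mathcal{R}_i^{i+1}A, \mathcal{R}_i^{i+1}B]^{i+2}$ witnesses that raising commutes with the formation of internal homs; composing $\mathcal{F}_{i+2}(\gamma_{A,B})$ with the cone isomorphism $\eta_{i+1}^{i+2}$ applied to $\mathcal{F}_{i+1}([A,B]^{i+1})$ identifies the level-$i$ and level-$(i+1)$ definitions of $[X,Y]$ canonically, so $[X,Y]$ is well defined up to coherent isomorphism. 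The enrichment bijection $\kappa : \mathbf{C}(X,Y) \cong \mathbf{C}(I,[X,Y])$ is then obtained by combining the enrichment bijection of the $\mathbf{C}^{i+1}_{\cong}$-smc $\mathbf{C}^i$ with full faithfulness of $\mathcal{F}_i$ and $\mathcal{F}_{i+1}$ and the linking isomorphisms. The linking isomorphism is furnished directly by the cone, since $\eta_i^{i+1} : \mathcal{F}_i(-) \Rightarrow \mathcal{F}_{i+1}\circ\mathcal{R}_i^{i+1}(-) = \mathcal{F}_{i+1}([I_i,-])$ is, under the above definition, exactly a monoidal natural isomorphism $X \to [I,X]$; and faithful usage is inherited from the faithfulness of each $\mathbf{C}^{i+1}_{\cong}$-smc $\mathbf{C}^i$ (part of the fully faithful hypothesis), transported along the full and faithful $\mathcal{F}_i$. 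The characterization lemma then delivers the closed symmetric monoidal structure.

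The main obstacle I anticipate is verifying that the level-by-level assembly of the enrichment is genuinely coherent: one must check that the composition morphisms, associators, unitors, and the interchange law defined at successive levels agree on their overlaps after transport along the $\mathcal{F}_i$. This is precisely where the full faithfulness of the $\Gamma_i$ (equivalently, that raising commutes with internal homs) does the work, and it is also where the infiniteness of the tower is indispensable: for \emph{every} pair of objects, represented at some finite level, one needs a strictly higher level at which to form the hom-object and its composition morphisms, a requirement that has no solution when the tower terminates.
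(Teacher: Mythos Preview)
Your plan is correct in spirit and uses the same ingredients as the paper (essential surjectivity to assign levels, the raising functors and $\gamma$ to identify hom-objects across levels, the cone isomorphisms $\eta$ for linking, and the faithfulness hypothesis for uniqueness), but the paper takes a different route. Rather than assembling a global $\mathbf{C}_{\cong}$-smc structure on the apex and then invoking the linked-faithful characterisation lemma, the paper works directly with the co-universal arrow definition of closedness: it fixes for each object $A$ a chosen level $l_A$, representative $X_A$ and isomorphism $L_A:A\to\mathcal{F}_{l_A}(X_A)$, defines $A\Rightarrow B:=\mathcal{F}_{l_{AB}+1}[\mathcal{R}_{l_A}^{l_{AB}}(X_A),\mathcal{R}_{l_B}^{l_{AB}}(X_B)]$ with $l_{AB}=\max(l_A,l_B)$, writes down an explicit evaluation $e_{A\Rightarrow B}$ built from $\eta$, $L$ and $\bigcirc$, and then verifies existence and uniqueness of the curried $\bar f$ for arbitrary $f:A\otimes C\to B$ by lifting to level $l_{ABC}+1$ and using the partial-insertion map $\Delta$ together with a compatibility lemma for $\mu_i^{i+1}:=\mathcal{F}_{i+1}(\gamma)\circ\eta_i^{i+1}$.

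What each approach buys: your reduction is conceptually cleaner and explains \emph{why} the result is true by reusing the characterisation lemma, but it front-loads exactly the coherence burden you flag --- one must promote the level-wise assignment to an honest functor $[-,-]:\mathbf{C}^{op}\times\mathbf{C}\to\mathbf{C}$ with globally defined $\bigcirc$, $\otimes$ and $\kappa$, which forces the same choices of representatives the paper makes and the same compatibility checks for $\mu$. The paper's direct argument sidesteps building the full self-enrichment: it only needs the evaluation and the curried map for each triple $A,B,C$, so the level juggling is localised to that single verification rather than spread across all the enriched-monoidal coherence axioms. In practice the two proofs perform essentially the same computations; the paper just packages them around the evaluation rather than around the enrichment.
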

\begin{proof}
Given in Appendix D.
\end{proof}
This result provides an operational justification for using closed symmetric monoidal categories to study higher order physics. From this position the consequences of basic physical principles in \textit{higher order} physics can be explored within a simple mathematical addition to symmetric monoidal categories. First steps in this direction of research are taken in \cite{wilsoncausality}, in which an interaction between the strength of spatial correlations, determinism, and the possibility of signalling between parties is observed.
\section{Conclusion}
Presented in this manuscript is a proposed beginning of a mathematical framework for higher order physical theories, centred around monoidal enrichment. This framework is put forward in analogy to the process theory framework for standard physics based on the notion of a symmetric monoidal category. The definitions proposed are easily iterated to define towers of theories, after-which currying in higher order theories is understood through two results: Linked faithful enriched categories are exactly closed monoidal categories, and categories into which infinite towers of higher order theories are suitably embedded, are always similarly always closed monoidal. Many open questions then follow from this point:
\begin{itemize}
    \item Do quantum supermaps, quantum combs, higher order quantum theory, and higher order causal categories, satisfy universal properties with respect to the above defined structure preserving maps, in analogy to the universal properties satisfied by CPTP maps \cite{statonDBLP:journals/corr/abs-1901-10117}?
    \item It is not clear that \textit{every} enrichment gives a valid example of a theory of supermaps, so what other properties should be expected? In particular, how could the key additional feature of local-applicability \cite{Chiribella2008TransformingSupermaps, Wilson2022QuantumLocality, Wilson2022FreeDimension} be combined with enriched monoidal structure? 
    \item Recent constructions in the literature suggest that in very general settings it may be more natural to think of the theory $\mathbf{V}$ as being multicategorical \cite{Wilson2022QuantumLocality}, polycategorical \cite{Wilson2022FreeDimension, Hefford_2023_comb}, or promonoidal \cite{earnshaw2023produoidal}, rather than plainly monoidal. Leading to a natural question of whether analogous tower theorems can be constructed for embedding into variants of closed structures in these more relaxed settings. 
    \item How can we include supermaps on constrained spaces such as one-way signalling channels and routed quantum supermaps \cite{Vanrietvelde_2021, vanderrouted, wilson_constraints}?
    \item What does the view of higher order physics as enriched monoidal structure have to say about resource theories of higher order processes \cite{takagiresourcechannels, Kristjansson2020ResourceCommunication, gour2020dynamical, Gonda2023monotonesingeneral, Coecke2014AResources, FRITZ_2015_resource}?
\end{itemize}
This paper aims to lay a basic starting point from which the above questions can be formulated and answered.



\begin{acknowledgments}
MW would like to thank A Vanrietvelde, J Hefford, P Selinger, B Coecke, and G Boisseau for useful conversations. This work is supported by the Hong Kong Research Grant Council through grant 17300918 and though the Senior Research Fellowship Scheme SRFS2021-
7S02, by the Croucher Foundation, by the John Templeton Foundation through grant 61466, The Quantum Information Structure of Spacetime (qiss.fr). Research at the Perimeter Institute is supported by
the Government of Canada through the Department of Innovation, Science and Economic Development
Canada and by the Province of Ontario through the Ministry of Research, Innovation and Science. The
opinions expressed in this publication are those of the authors and do not necessarily reflect the views of
the John Templeton Foundation. MW gratefully acknowledges support by University College London
and the EPSRC Doctoral Training Centre for Delivering Quantum Technologies through grant number [EP/L015242/1].
\end{acknowledgments}

\newpage

\bibliographystyle{plainnat}
\bibliography{references_link_format}

\begin{thebibliography}{73}
\providecommand{\natexlab}[1]{#1}
\providecommand{\url}[1]{\texttt{#1}}
\expandafter\ifx\csname urlstyle\endcsname\relax
  \providecommand{\doi}[1]{doi: #1}\else
  \providecommand{\doi}{doi: \begingroup \urlstyle{rm}\Url}\fi

\bibitem[Ambler(1991)]{Ambler1991FirstCategories}
Simon Ambler.
\newblock \emph{First order linear logic in symmetric monoidal closed categories}.
\newblock PhD thesis, University of Edinburgh, {UK}, 1991.
\newblock URL \url{https://hdl.handle.net/1842/11974}.

\bibitem[Apadula et~al.(2022)Apadula, Bisio, and Perinotti]{apadula2022nosignalling}
Luca Apadula, Alessandro Bisio, and Paolo Perinotti.
\newblock No-signalling constrains quantum computation with indefinite causal structure.
\newblock 2022.
\newblock \href{https://doi.org/10.48550/arXiv.2202.10214}{https://doi.org/10.48550/arXiv.2202.10214}.

\bibitem[Ara{\'{u}}jo et~al.(2017)Ara{\'{u}}jo, Feix, Navascu{\'{e}}s, and Brukner]{Ara_jo_2017_purification}
Mateus Ara{\'{u}}jo, Adrien Feix, Miguel Navascu{\'{e}}s, and {\v{C}}aslav Brukner.
\newblock A purification postulate for quantum mechanics with indefinite causal order.
\newblock \emph{Quantum}, 1:\penalty0 10, 2017.
\newblock \href{https://doi.org/10.22331\%2Fq-2017-04-26-10}{https://doi.org/10.22331\%2Fq-2017-04-26-10}.

\bibitem[Barrett et~al.(2021)Barrett, Lorenz, and Oreshkov]{BarrettCyclicModels}
Jonathan Barrett, Robin Lorenz, and Ognyan Oreshkov.
\newblock Cyclic quantum causal models.
\newblock \emph{Nature Communications}, 12\penalty0 (1), 2021.
\newblock \href{https://doi.org/10.1038/s41467-020-20456-x}{https://doi.org/10.1038/s41467-020-20456-x}.

\bibitem[Benton et~al.(1993)Benton, Bierman, Paiva, and Hyland]{BentonLinear-CalculusRevisited}
Nick Benton, Gavin Bierman, Valeria Paiva, and Martin Hyland.
\newblock Linear lambda-calculus and categorical models revisited.
\newblock In \emph{Computer Science Logic}, pages 61--84. Springer Berlin Heidelberg, 1993.
\newblock \href{https://doi.org/10.1007/3-540-56992-8\_6}{https://doi.org/10.1007/3-540-56992-8\_6}.

\bibitem[Bisio and Perinotti(2019)]{Bisio2019TheoreticalTheory}
Alessandro Bisio and Paolo Perinotti.
\newblock {Theoretical framework for higher-order quantum theory}.
\newblock \emph{Proceedings of the Royal Society A: Mathematical, Physical and Engineering Sciences}, 475\penalty0 (2225):\penalty0 20180706, 2019.
\newblock \href{https://doi.org/10.1098/rspa.2018.0706}{https://doi.org/10.1098/rspa.2018.0706}.

\bibitem[Bisio et~al.(2009)Bisio, Chiribella, D'Ariano, Facchini, and Perinotti]{bisio2009optimal}
Alessandro Bisio, Giulio Chiribella, GM~D'Ariano, Stefano Facchini, and Paolo Perinotti.
\newblock Optimal quantum tomography of states, measurements, and transformations.
\newblock \emph{Physical review letters}, 102\penalty0 (1):\penalty0 010404, 2009.
\newblock \href{https://doi.org/10.1103/PhysRevLett.102.010404}{https://doi.org/10.1103/PhysRevLett.102.010404}.

\bibitem[Bisio et~al.(2010)Bisio, Chiribella, D'Ariano, Facchini, and Perinotti]{bisio2010optimal}
Alessandro Bisio, Giulio Chiribella, Giacomo~Mauro D'Ariano, Stefano Facchini, and Paolo Perinotti.
\newblock Optimal quantum learning of a unitary transformation.
\newblock \emph{Physical Review A}, 81\penalty0 (3):\penalty0 032324, 2010.
\newblock \href{https://doi.org/10.1103/PhysRevA.81.032324}{https://doi.org/10.1103/PhysRevA.81.032324}.

\bibitem[Branciard et~al.(2015)Branciard, Ara{\'{u}}jo, Feix, Costa, and Brukner]{Branciard2015TheViolation}
Cyril Branciard, Mateus Ara{\'{u}}jo, Adrien Feix, Fabio Costa, and {\v{C}}aslav Brukner.
\newblock The simplest causal inequalities and their violation.
\newblock \emph{New Journal of Physics}, 18\penalty0 (1):\penalty0 013008, 2015.
\newblock \href{https://doi.org/10.1088/1367-2630/18/1/013008}{https://doi.org/10.1088/1367-2630/18/1/013008}.

\bibitem[Chiribella et~al.(2008{\natexlab{a}})Chiribella, D'Ariano, and Perinotti]{Chiribella2008TransformingSupermaps}
Giulio. Chiribella, Giacomo.~Mauro. D'Ariano, and Paolo. Perinotti.
\newblock {Transforming quantum operations: Quantum supermaps}.
\newblock \emph{EPL (Europhysics Letters)}, 83\penalty0 (3):\penalty0 30004, 2008{\natexlab{a}}.
\newblock \href{https://doi.org/10.1209/0295-5075/83/30004}{https://doi.org/10.1209/0295-5075/83/30004}.

\bibitem[Chiribella et~al.(2008{\natexlab{b}})Chiribella, D'Ariano, and Perinotti]{chiribella2008optimal}
Giulio Chiribella, Giacomo~Mauro D'Ariano, and Paolo Perinotti.
\newblock Optimal cloning of unitary transformation.
\newblock \emph{Physical review letters}, 101\penalty0 (18):\penalty0 180504, 2008{\natexlab{b}}.
\newblock \href{https://doi.org/10.1103/PhysRevLett.101.180504}{https://doi.org/10.1103/PhysRevLett.101.180504}.

\bibitem[Chiribella et~al.(2009)Chiribella, D'Ariano, and Perinotti]{chiribella2009theoretical}
Giulio Chiribella, Giacomo~Mauro D'Ariano, and Paolo Perinotti.
\newblock Theoretical framework for quantum networks.
\newblock \emph{Physical Review A}, 80\penalty0 (2):\penalty0 022339, 2009.
\newblock \href{https://doi.org/10.1103/PhysRevA.80.022339}{https://doi.org/10.1103/PhysRevA.80.022339}.

\bibitem[Chiribella et~al.(2010)Chiribella, D'Ariano, and Perinotti]{Chiribella_2010_prob_recon}
Giulio Chiribella, Giacomo~Mauro D'Ariano, and Paolo Perinotti.
\newblock Probabilistic theories with purification.
\newblock \emph{Physical Review A}, 81\penalty0 (6), 2010.
\newblock \href{https://doi.org/10.1103\%2Fphysreva.81.062348}{https://doi.org/10.1103\%2Fphysreva.81.062348}.

\bibitem[Chiribella et~al.(2013{\natexlab{a}})Chiribella, D'Ariano, Perinotti, and Valiron]{Chiribella2013QuantumStructure}
Giulio Chiribella, Giacomo~Mauro D'Ariano, Paolo Perinotti, and Benoit Valiron.
\newblock {Quantum computations without definite causal structure}.
\newblock \emph{Physical Review A - Atomic, Molecular, and Optical Physics}, 88\penalty0 (2):\penalty0 022318, 2013{\natexlab{a}}.
\newblock ISSN 10502947.
\newblock \href{https://doi.org/10.1103/PhysRevA.88.022318}{https://doi.org/10.1103/PhysRevA.88.022318}.

\bibitem[Chiribella et~al.(2013{\natexlab{b}})Chiribella, Toigo, and Umanit{\`a}]{chiribella2013normal}
Giulio Chiribella, Alessandro Toigo, and Veronica Umanit{\`a}.
\newblock Normal completely positive maps on the space of quantum operations.
\newblock \emph{Open Systems \& Information Dynamics}, 20\penalty0 (01):\penalty0 1350003, 2013{\natexlab{b}}.
\newblock \href{https://doi.org/10.1142/S1230161213500030}{https://doi.org/10.1142/S1230161213500030}.

\bibitem[Choi(1975)]{Choi1975CompletelyMatrices}
Man~Duen Choi.
\newblock {Completely positive linear maps on complex matrices}.
\newblock \emph{Linear Algebra and Its Applications}, 10\penalty0 (3):\penalty0 285--290, 1975.
\newblock \href{https://doi.org/10.1016/0024-3795(75)90075-0}{https://doi.org/10.1016/0024-3795(75)90075-0}.

\bibitem[Clément and Perdrix(2020)]{pbs_paper}
Alexandre Clément and Simon Perdrix.
\newblock Pbs-calculus: A graphical language for coherent control of quantum computations.
\newblock Schloss Dagstuhl - Leibniz-Zentrum für Informatik, 2020.
\newblock \href{https://doi.org/10.4230/lipics.mfcs.2020.24}{https://doi.org/10.4230/lipics.mfcs.2020.24}.

\bibitem[Coecke and Duncan(2009)]{Coecke2009InteractingDiagrammatics}
Bob Coecke and Ross Duncan.
\newblock {Interacting Quantum Observables: Categorical Algebra and Diagrammatics}.
\newblock 2009.
\newblock \href{https://doi.org/10.1088/1367-2630/13/4/043016}{https://doi.org/10.1088/1367-2630/13/4/043016}.

\bibitem[Coecke and Kissinger(2010)]{Coecke2010TheEntanglement}
Bob Coecke and Aleks Kissinger.
\newblock The compositional structure of multipartite quantum entanglement.
\newblock In \emph{Automata, Languages and Programming}, pages 297--308. Springer Berlin Heidelberg, 2010.
\newblock \href{https://doi.org/10.1007/978-3-642-14162-1\_25}{https://doi.org/10.1007/978-3-642-14162-1\_25}.

\bibitem[Coecke and Kissinger(2017)]{Coecke2017PicturingReasoning}
Bob Coecke and Aleks Kissinger.
\newblock \emph{{Picturing quantum processes: A first course in quantum theory and diagrammatic reasoning}}.
\newblock Cambridge University Press, 2017.
\newblock ISBN 9781316219317.
\newblock \href{https://doi.org/10.1017/9781316219317}{https://doi.org/10.1017/9781316219317}.

\bibitem[Coecke and Lal(2013)]{Coecke2013CausalProcesses}
Bob Coecke and Raymond Lal.
\newblock {Causal Categories: Relativistically Interacting Processes}.
\newblock \emph{Foundations of Physics}, 43\penalty0 (4):\penalty0 458--501, 2013.
\newblock ISSN 00159018.
\newblock \href{https://doi.org/10.1007/s10701-012-9646-8}{https://doi.org/10.1007/s10701-012-9646-8}.

\bibitem[Coecke and Pavlovic(2007)]{Coecke2007QuantumSums}
Bob Coecke and Dusko Pavlovic.
\newblock {Quantum measurements without sums}.
\newblock In \emph{Mathematics of Quantum Computation and Quantum Technology}, pages 559--596. CRC Press, 2007.
\newblock ISBN 9781584889007.
\newblock \href{https://doi.org/10.48550/arXiv.quant-ph/0608035}{https://doi.org/10.48550/arXiv.quant-ph/0608035}.

\bibitem[Coecke et~al.(2011)Coecke, Edwards, and Spekkens]{Coecke2011PhaseQubits}
Bob Coecke, Bill Edwards, and Robert~W. Spekkens.
\newblock {Phase groups and the origin of non-locality for qubits}.
\newblock In \emph{Electronic Notes in Theoretical Computer Science}, volume 270, pages 15--36. Elsevier, 2011.
\newblock \href{https://doi.org/10.1016/j.entcs.2011.01.021}{https://doi.org/10.1016/j.entcs.2011.01.021}.

\bibitem[Coecke et~al.(2014)Coecke, Fritz, and Spekkens]{Coecke2014AResources}
Bob Coecke, Tobias Fritz, and Robert~W. Spekkens.
\newblock {A mathematical theory of resources}.
\newblock \emph{Information and Computation}, 2014.
\newblock \href{https://doi.org/10.1016/j.ic.2016.02.008}{https://doi.org/10.1016/j.ic.2016.02.008}.

\bibitem[Cruttwell()]{Cruttwell2008NormedCategories}
G.S.H Cruttwell.
\newblock Normed spaces and the change of base for enriched categories.
\newblock \href{https://www.reluctantm.com/gcruttw/publications/thesis4.pdf}{https://www.reluctantm.com/gcruttw/publications/thesis4.pdf}.

\bibitem[Day and Street(1997)]{DAY199799}
Brian Day and Ross Street.
\newblock Monoidal bicategories and hopf algebroids.
\newblock \emph{Advances in Mathematics}, 129\penalty0 (1):\penalty0 99--157, 1997.
\newblock \href{https://doi.org/10.1006/aima.1997.1649}{https://doi.org/10.1006/aima.1997.1649}.

\bibitem[Earnshaw et~al.(2023)Earnshaw, Hefford, and Román]{earnshaw2023produoidal}
Matt Earnshaw, James Hefford, and Mario Román.
\newblock The produoidal algebra of process decomposition, 2023.
\newblock \href{https://doi.org/10.48550/arXiv.2301.11867}{https://doi.org/10.48550/arXiv.2301.11867}.

\bibitem[Ebler et~al.(2018)Ebler, Salek, and Chiribella]{Ebler2018EnhancedOrder}
Daniel Ebler, Sina Salek, and Giulio Chiribella.
\newblock {Enhanced Communication with the Assistance of Indefinite Causal Order}.
\newblock \emph{Physical Review Letters}, 120\penalty0 (12):\penalty0 120502, 2018.
\newblock \href{https://doi.org/10.1103/PhysRevLett.120.120502}{https://doi.org/10.1103/PhysRevLett.120.120502}.

\bibitem[Feix et~al.(2016)Feix, Ara{\'{u}}jo, and Brukner]{Feix_2016_causal_model}
Adrien Feix, Mateus Ara{\'{u}}jo, and {\v{C}}aslav Brukner.
\newblock Causally nonseparable processes admitting a causal model.
\newblock \emph{New Journal of Physics}, 18\penalty0 (8):\penalty0 083040, 2016.
\newblock \href{https://doi.org/10.1088\%2F1367-2630\%2F18\%2F8\%2F083040}{https://doi.org/10.1088\%2F1367-2630\%2F18\%2F8\%2F083040}.

\bibitem[Fritz(2015)]{FRITZ_2015_resource}
Tobias Fritz.
\newblock Resource convertibility and ordered commutative monoids.
\newblock \emph{Mathematical Structures in Computer Science}, 27\penalty0 (6):\penalty0 850--938, 2015.
\newblock \href{https://doi.org/10.1017\%2Fs0960129515000444}{https://doi.org/10.1017\%2Fs0960129515000444}.

\bibitem[Galley et~al.(2022)Galley, Giacomini, and Selby]{GalleyATheory}
Thomas~D. Galley, Flaminia Giacomini, and John~H. Selby.
\newblock A no-go theorem on the nature of the gravitational field beyond quantum theory.
\newblock \emph{Quantum}, 6:\penalty0 779, 2022.
\newblock \href{https://doi.org/10.22331/q-2022-08-17-779}{https://doi.org/10.22331/q-2022-08-17-779}.

\bibitem[Gogioso and Pinzani(2023)]{gogioso2023geometry}
Stefano Gogioso and Nicola Pinzani.
\newblock The geometry of causality, 2023.
\newblock \href{https://doi.org/10.48550/arXiv.2303.09017}{https://doi.org/10.48550/arXiv.2303.09017}.

\bibitem[Gonda and Spekkens(2023)]{Gonda2023monotonesingeneral}
Tom{\'{a}}{\v{s}} Gonda and Robert~W. Spekkens.
\newblock Monotones in {G}eneral {R}esource {T}heories.
\newblock \emph{{Compositionality}}, 5, 2023.
\newblock ISSN 2631-4444.
\newblock \href{https://doi.org/10.32408/compositionality-5-7}{https://doi.org/10.32408/compositionality-5-7}.

\bibitem[Gour and Scandolo(2020)]{gour2020dynamical}
Gilad Gour and Carlo~Maria Scandolo.
\newblock Dynamical entanglement.
\newblock \emph{Phys. Rev. Lett.}, 125:\penalty0 180505, 2020.
\newblock \href{https://doi.org/10.1103/PhysRevLett.125.180505}{https://doi.org/10.1103/PhysRevLett.125.180505}.

\bibitem[Gour and Winter(2019)]{gour2019quantify}
Gilad Gour and Andreas Winter.
\newblock How to quantify a dynamical quantum resource.
\newblock \emph{Phys. Rev. Lett.}, 123\penalty0 (15), 2019.
\newblock \href{https://doi.org/10.1103/physrevlett.123.150401}{https://doi.org/10.1103/physrevlett.123.150401}.

\bibitem[Gu{\'{e}}rin et~al.(2016)Gu{\'{e}}rin, Feix, Ara{\'{u}}jo, and Brukner]{Guerin2016ExponentialCommunication}
Philippe~Allard Gu{\'{e}}rin, Adrien Feix, Mateus Ara{\'{u}}jo, and {\v C}aslav Brukner.
\newblock {Exponential Communication Complexity Advantage from Quantum Superposition of the Direction of Communication}.
\newblock \emph{Physical Review Letters}, 117\penalty0 (10):\penalty0 100502, 2016.
\newblock ISSN 10797114.
\newblock \href{https://doi.org/10.1103/PhysRevLett.117.100502}{https://doi.org/10.1103/PhysRevLett.117.100502}.

\bibitem[Hardy(2007)]{hardy2007towards}
Lucien Hardy.
\newblock Towards quantum gravity: a framework for probabilistic theories with non-fixed causal structure.
\newblock \emph{J. Phys. A: Math. Theor.}, 40\penalty0 (12):\penalty0 3081--3099, 2007.
\newblock \href{https://doi.org/10.1088/1751-8113/40/12/s12}{https://doi.org/10.1088/1751-8113/40/12/s12}.

\bibitem[Hefford and Comfort(2023)]{Hefford_2023_comb}
James Hefford and Cole Comfort.
\newblock Coend optics for quantum combs.
\newblock \emph{Electronic Proceedings in Theoretical Computer Science}, 380:\penalty0 63--76, 2023.
\newblock \href{https://doi.org/10.4204\%2Feptcs.380.4}{https://doi.org/10.4204\%2Feptcs.380.4}.

\bibitem[Heunen and Vicary(2019)]{Heunen2013IntroductionMechanics}
Chris Heunen and Jamie Vicary.
\newblock \emph{Categories for Quantum Theory: An Introduction}.
\newblock Oxford University Press, 2019.
\newblock ISBN 9780198739623.
\newblock \href{https://doi.org/10.1093/oso/9780198739623.001.0001}{https://doi.org/10.1093/oso/9780198739623.001.0001}.

\bibitem[Hoffreumon and Oreshkov(2022)]{hoffreumon2022projective}
Timothée Hoffreumon and Ognyan Oreshkov.
\newblock Projective characterization of higher-order quantum transformations, 2022.
\newblock \href{https://doi.org/10.48550/arXiv.2206.06206}{https://doi.org/10.48550/arXiv.2206.06206}.

\bibitem[Huot and Staton(2018)]{statonDBLP:journals/corr/abs-1901-10117}
Mathieu Huot and Sam Staton.
\newblock Universal properties in quantum theory.
\newblock In \emph{Proceedings 15th International Conference on Quantum Physics and Logic}, volume 287 of \emph{{EPTCS}}, pages 213--223, 2018.
\newblock \href{https://doi.org/10.4204/EPTCS.287.12}{https://doi.org/10.4204/EPTCS.287.12}.

\bibitem[Kelly(1983)]{Kelly2005}
M.~G. Kelly.
\newblock Basic concepts of enriched category theory (london mathematical society lecture note series, 64).
\newblock \emph{Bulletin of the London Mathematical Society}, 15\penalty0 (1):\penalty0 96--96, 1983.
\newblock \href{https://doi.org/10.1112/blms/15.1.96}{https://doi.org/10.1112/blms/15.1.96}.

\bibitem[Kissinger and Uijlen(2019)]{Kissinger2019AStructure}
Aleks Kissinger and Sander Uijlen.
\newblock A categorical semantics for causal structure.
\newblock \emph{Logical Methods in Computer Science}, 15\penalty0 (3), 2019.
\newblock ISSN 18605974.
\newblock \href{https://doi.org/10.23638/LMCS-15(3:15)2019}{https://doi.org/10.23638/LMCS-15(3:15)2019}.

\bibitem[Kristj{\'{a}}nsson et~al.(2020)Kristj{\'{a}}nsson, Chiribella, Salek, Ebler, and Wilson]{Kristjansson2020ResourceCommunication}
Hl{\'e}r Kristj{\'{a}}nsson, Giulio Chiribella, Sina Salek, Daniel Ebler, and Matthew Wilson.
\newblock Resource theories of communication.
\newblock \emph{New Journal of Physics}, 22\penalty0 (7):\penalty0 073014, 2020.
\newblock ISSN 1367-2630.
\newblock \href{https://doi.org/10.1088/1367-2630/ab8ef7}{https://doi.org/10.1088/1367-2630/ab8ef7}.

\bibitem[Lane(1971)]{Lane1971CategoriesMathematician}
Saunders~Mac Lane.
\newblock \emph{Categories for the Working Mathematician}, volume~5 of \emph{Graduate Texts in Mathematics}.
\newblock Springer New York, New York, NY, 1971.
\newblock ISBN 978-0-387-90036-0.
\newblock \href{http://link.springer.com/10.1007/978-1-4612-9839-7}{http://link.springer.com/10.1007/978-1-4612-9839-7}.

\bibitem[Lucyshyn-Wright(2015)]{lucy_relativesymmetric}
Rory B.~B. Lucyshyn-Wright.
\newblock Relative symmetric monoidal closed categories i: Autoenrichment and change of base.
\newblock 2015.
\newblock \href{https://doi.org/10.48550/ARXIV.1507.02220}{https://doi.org/10.48550/ARXIV.1507.02220}.

\bibitem[McCrudden(2000)]{paddy_coalgebroids}
Paddy McCrudden.
\newblock Balanced coalgebroids.
\newblock \emph{Theory and Application of Categories}, 7\penalty0 (6):\penalty0 71--147, 2000.
\newblock \href{http://www.tac.mta.ca/tac/volumes/7/n6/n6.pdf}{http://www.tac.mta.ca/tac/volumes/7/n6/n6.pdf}.

\bibitem[Melli{\`{e}}s(2006)]{Mellies2006FunctorialDiagrams}
Paul~Andr{\'e} Melli{\`{e}}s.
\newblock {Functorial boxes in string diagrams}.
\newblock In \emph{Lecture Notes in Computer Science (including subseries Lecture Notes in Artificial Intelligence and Lecture Notes in Bioinformatics)}, volume 4207 LNCS, pages 1--30. Springer Verlag, 2006.
\newblock ISBN 3540454586.
\newblock \doi{10.1007/11874683{\_}1}.
\newblock URL \url{https://link.springer.com/chapter/10.1007/11874683_1}.

\bibitem[Miyazaki et~al.(2019)Miyazaki, Soeda, and Murao]{miyazaki2019complex}
Jisho Miyazaki, Akihito Soeda, and Mio Murao.
\newblock Complex conjugation supermap of unitary quantum maps and its universal implementation protocol.
\newblock \emph{Physical Review Research}, 1\penalty0 (1), 2019.
\newblock \href{https://doi.org/10.1103/physrevresearch.1.013007}{https://doi.org/10.1103/physrevresearch.1.013007}.

\bibitem[Oreshkov et~al.(2012)Oreshkov, Costa, and Brukner]{Oreshkov2012QuantumOrder}
Ognyan Oreshkov, Fabio Costa, and {\v C}aslav Brukner.
\newblock {Quantum correlations with no causal order}.
\newblock \emph{Nature Communications}, 3, 2012.
\newblock ISSN 20411723.
\newblock \href{https://doi.org/10.1038/ncomms2076}{https://doi.org/10.1038/ncomms2076}.

\bibitem[Pagani et~al.(2013)Pagani, Selinger, and Valiron]{Pagani2013Applying}
Michele Pagani, Peter Selinger, and Beno{\^\i}t Valiron.
\newblock {Applying Quantitative Semantics to Higher-Order Quantum Computing *}.
\newblock Technical report, 2013.
\newblock \href{https://doi.org/10.48550/arXiv.1311.2290}{https://doi.org/10.48550/arXiv.1311.2290}.

\bibitem[Perinotti(2017)]{perinotti2017causal}
Paolo Perinotti.
\newblock Causal structures and the classification of higher order quantum computations.
\newblock In \emph{Time in Physics}, pages 103--127. Springer International Publishing, 2017.
\newblock \href{https://doi.org/10.1007/978-3-319-68655-4\_7}{https://doi.org/10.1007/978-3-319-68655-4\_7}.

\bibitem[Quintino et~al.(2019)Quintino, Dong, Shimbo, Soeda, and Murao]{quintino2019probabilistic}
Marco~T{\'{u}}lio Quintino, Qingxiuxiong Dong, Atsushi Shimbo, Akihito Soeda, and Mio Murao.
\newblock Probabilistic exact universal quantum circuits for transforming unitary operations.
\newblock \emph{Physical Review A}, 100\penalty0 (6), 2019.
\newblock \href{https://doi.org/10.1103\%2Fphysreva.100.062339}{https://doi.org/10.1103\%2Fphysreva.100.062339}.

\bibitem[Rennela and Staton(2020)]{statonlmcs:6192}
Mathys Rennela and Sam Staton.
\newblock {Classical Control, Quantum Circuits and Linear Logic in Enriched Category Theory}.
\newblock \emph{{Logical Methods in Computer Science}}, {Volume 16, Issue 1}, 2020.
\newblock \href{https://doi.org/10.23638/LMCS-16(1:30)2020}{https://doi.org/10.23638/LMCS-16(1:30)2020}.

\bibitem[Sedl{\'{a}}k et~al.(2019)Sedl{\'{a}}k, Bisio, and Ziman]{sedlak2019optimal}
Michal Sedl{\'{a}}k, Alessandro Bisio, and M{\'{a}}rio Ziman.
\newblock Optimal probabilistic storage and retrieval of unitary channels.
\newblock \emph{Physical Review Letters}, 122\penalty0 (17), 2019.
\newblock \href{https://doi.org/10.1103\%2Fphysrevlett.122.170502}{https://doi.org/10.1103\%2Fphysrevlett.122.170502}.

\bibitem[Selby et~al.(2021)Selby, Scandolo, and Coecke]{Selby2018ReconstructingPostulates}
John~H. Selby, Carlo~Maria Scandolo, and Bob Coecke.
\newblock Reconstructing quantum theory from diagrammatic postulates.
\newblock \emph{Quantum}, 5:\penalty0 445, 2021.
\newblock \href{https://doi.org/10.22331\%2Fq-2021-04-28-445}{https://doi.org/10.22331\%2Fq-2021-04-28-445}.

\bibitem[Selinger(2004)]{Selinger2004TowardsLanguage}
Peter Selinger.
\newblock {Towards a quantum programming language}.
\newblock \emph{Mathematical Structures in Computer Science}, 14\penalty0 (4):\penalty0 527--586, 2004.
\newblock ISSN 14698072.
\newblock \href{https://doi.org/10.1017/s0960129504004256 }{https://doi.org/10.1017/s0960129504004256}.

\bibitem[Selinger(2007)]{Selinger2007DaggerAbstract}
Peter Selinger.
\newblock {Dagger Compact Closed Categories and Completely Positive Maps. (Extended Abstract)}.
\newblock \emph{Electronic Notes in Theoretical Computer Science}, 170:\penalty0 139--163, 2007.
\newblock ISSN 15710661.
\newblock \href{https://doi.org/10.1016/j.entcs.2006.12.018}{https://doi.org/10.1016/j.entcs.2006.12.018}.

\bibitem[Selinger and Valiron(2005)]{Selinger2005AControl}
Peter Selinger and Beno{\^\i}t Valiron.
\newblock {A lambda calculus for quantum computation with classical control}.
\newblock In \emph{Lecture Notes in Computer Science}, volume 3461, pages 354--368. Springer Verlag, 2005.
\newblock \href{https://doi.org/10.1007/11417170\_26}{https://doi.org/10.1007/11417170\_26 }.

\bibitem[Simmons and Kissinger(2022)]{simmons2022higherorder}
Will Simmons and Aleks Kissinger.
\newblock Higher-order causal theories are models of bv-logic, 2022.
\newblock \href{https://doi.org/10.48550/arXiv.2205.11219}{https://doi.org/10.48550/arXiv.2205.11219}.

\bibitem[Takagi et~al.(2020)Takagi, Wang, and Hayashi]{takagiresourcechannels}
Ryuji Takagi, Kun Wang, and Masahito Hayashi.
\newblock Application of the resource theory of channels to communication scenarios.
\newblock \emph{Physical Review Letters}, 124\penalty0 (12), 2020.
\newblock \href{http://dx.doi.org/10.1103/PhysRevLett.124.120502}{http://dx.doi.org/10.1103/PhysRevLett.124.120502}.

\bibitem[Tull(2020)]{Tull2020ATHEORY}
Sean Tull.
\newblock {A CATEGORICAL RECONSTRUCTION OF QUANTUM THEORY}.
\newblock \emph{Logical Methods in Computer Science}, 16\penalty0 (1):\penalty0 39, 2020.
\newblock \href{https://doi.org/10.23638/LMCS-16(1:4)2020}{https://doi.org/10.23638/LMCS-16(1:4)2020}.

\bibitem[van~der Lugt et~al.(2023)van~der Lugt, Barrett, and Chiribella]{device_independent_causal_lugt}
Tein van~der Lugt, Jonathan Barrett, and Giulio Chiribella.
\newblock Device-independent certification of indefinite causal order in the quantum switch.
\newblock \emph{Nature Communications}, 14\penalty0 (1):\penalty0 5811, 2023.
\newblock \href{{https://doi.org/10.1038/s41467-023-40162-8}}{https://doi.org/10.1038/s41467-023-40162-8}.

\bibitem[Van~Tonder(2004)]{VanTonder2004AComputation}
Andr{\'e} Van~Tonder.
\newblock {A lambda calculus for quantum computation}.
\newblock \emph{SIAM Journal on Computing}, 33\penalty0 (5):\penalty0 1109--1135, 2004.
\newblock ISSN 00975397.
\newblock \href{https://doi.org/10.1137/S0097539703432165}{https://doi.org/10.1137/S0097539703432165}.

\bibitem[Vanrietvelde and Chiribella(2021)]{Vanrietvelde_2021}
Augustin Vanrietvelde and Giulio Chiribella.
\newblock Universal control of quantum processes using sector-preserving channels.
\newblock \emph{Quantum Information and Computation}, 21\penalty0 (15{\&}16):\penalty0 1320--1352, 2021.
\newblock \href{https://doi.org/10.26421/qic21.15-16-5}{https://doi.org/10.26421/qic21.15-16-5}.

\bibitem[Vanrietvelde et~al.(2021)Vanrietvelde, Kristj{\'a}nsson, and Barrett]{vanderrouted}
Augustin Vanrietvelde, Hl{\'e}r Kristj{\'a}nsson, and Jonathan Barrett.
\newblock Routed quantum circuits.
\newblock \emph{Quantum}, 5:\penalty0 503, 2021.
\newblock ISSN 2521-327X.
\newblock \href{http://doi.org/10.22331/q-2021-07-13-503}{http://doi.org/10.22331/q-2021-07-13-503}.

\bibitem[Vicary(2011)]{Vicary2011CategoricalAlgebras}
Jamie Vicary.
\newblock {Categorical Formulation of Finite-Dimensional Quantum Algebras}.
\newblock \emph{Communications in Mathematical Physics}, 304\penalty0 (3):\penalty0 765--796, 2011.
\newblock ISSN 00103616.
\newblock \href{https://doi.org/10.1007/s00220-010-1138-0}{https://doi.org/10.1007/s00220-010-1138-0}.

\bibitem[Wechs et~al.(2021)Wechs, Dourdent, Abbott, and Branciard]{Wechs_2021_control}
Julian Wechs, Hippolyte Dourdent, Alastair~A. Abbott, and Cyril Branciard.
\newblock Quantum circuits with classical versus quantum control of causal order.
\newblock \emph{{PRX} Quantum}, 2\penalty0 (3), 2021.
\newblock \href{https://doi.org/10.1103\%2Fprxquantum.2.030335}{https://doi.org/10.1103\%2Fprxquantum.2.030335}.

\bibitem[Wilson and Chiribella(2021)]{wilsoncausality}
Matt Wilson and Giulio Chiribella.
\newblock Causality in higher order process theories.
\newblock \emph{Electronic Proceedings in Theoretical Computer Science}, 343:\penalty0 265--300, 2021.
\newblock \href{http://dx.doi.org/10.4204/EPTCS.343.12}{http://dx.doi.org/10.4204/EPTCS.343.12}.

\bibitem[Wilson and Chiribella(2022)]{Wilson2022FreeDimension}
Matt Wilson and Giulio Chiribella.
\newblock {Free Polycategories for Unitary Supermaps of Arbitrary Dimension}.
\newblock 2022.
\newblock \href{https://doi.org/10.48550/arxiv.2207.09180}{https://doi.org/10.48550/arxiv.2207.09180}.

\bibitem[Wilson and Vanrietvelde(2021)]{wilson_constraints}
Matt Wilson and Augustin Vanrietvelde.
\newblock Composable constraints.
\newblock Technical report, 2021.
\newblock \href{https://doi.org/10.48550/arXiv.2112.06818}{https://doi.org/10.48550/arXiv.2112.06818}.

\bibitem[Wilson et~al.(2022)Wilson, Chiribella, and Kissinger]{Wilson2022QuantumLocality}
Matt Wilson, Giulio Chiribella, and Aleks Kissinger.
\newblock {Quantum Supermaps are Characterized by Locality}.
\newblock 2022.
\newblock \href{https://doi.org/10.48550/arxiv.2205.09844}{https://doi.org/10.48550/arxiv.2205.09844}.

\bibitem[Wootters and Zurek(1982)]{Wootters1982ACloned}
W.~K. Wootters and W.~H. Zurek.
\newblock {A single quantum cannot be cloned}.
\newblock \emph{Nature}, 299\penalty0 (5886):\penalty0 802--803, 1982.
\newblock ISSN 00280836.
\newblock \href{https://doi.org/10.1038/299802a0}{https://doi.org/10.1038/299802a0}.

\end{thebibliography}

\appendix

\section{Enriched Monoidal Categories}
A key notion for the description of supermaps will be that of an enriched category \cite{Kelly2005}, in this section we will find a well behaved category in which $\mathbf{V}_{\cong}$-smc's live and use this category to see pm-functors as morphisms arising from the Grothendieck construction.
Note that from any object $B$ of a monoidal category $\mathbf{V}$ and functor $[-,-]: \mathbf{C}^{op} \times \mathbf{C} \rightarrow \mathbf{V}$ a new functor $[-,B] \otimes [B,-]: \mathbf{C}^{op} \times \mathbf{C} \rightarrow \mathbf{V}$ can be constructed.
\begin{definition}
Let $\mathbf{C}$ be a category and $\mathbf{V}$ be a symmetric monoidal category, a $\mathbf{V}_{\kappa}$ enriched category $\mathbf{C}$ is a specification of
\begin{itemize}
    \item A hom-functor $[A,B]: \mathbf{C} \times \mathbf{C} \rightarrow \mathbf{V}$
    \item A natural transformation  ${\circ}_{ABC} : [A,B] \otimes [B,C] \rightarrow [A,C]$ (with a minor abuse of notation in the use of the symbol $\circ$)
    \item A natural transformation $\kappa: \mathbf{C}(A,B) \rightarrow \mathbf{V}(I,[A,B]])$ 
    \item Such that the following unitality and associativity conditions for the natural transformation $\circ_{ABC}$ hold:
        \begin{equation}
    \tikzfig{figs/smsc_asso1}
\end{equation}
\end{itemize}
\end{definition}
A $\mathbf{V}_{\cong}$ enriched category $\mathbf{C}$ is defined as a particular kind of $\mathbf{V}_{\kappa}$ enriched category. One might reasonably wonder why we go to the effort of defining $\mathbf{V}_{\kappa}$ enriched categories, this is to cope with a subtlety in which we would like to define a $\mathbf{V}_{\cong}$-smc $\mathbf{C}$ as a pseudomonoid, it turns out that such pseudomonoids are most naturally defined with respect to the $2$-monoidal structure of the category of $\mathbf{V}_{\kappa}$ categories into which the category of $\mathbf{V}_{\cong}$ categories embeds. 
\begin{definition}
A $\mathbf{V}_{\cong}$ enriched category $\mathbf{C}$ is a $\mathbf{V}_{\kappa}$ enriched category in which $\kappa$ is a natural isomorphism. 
\end{definition}
From the definition of $\mathbf{V}_{\kappa}$-categories one can define $\mathbf{V}_{\kappa}$-functors and $\mathbf{V}_{\kappa}$-natural transformations.
\begin{definition}
A $\mathbf{V}_{\kappa}$-functor between $\mathbf{V}_{\kappa}$ categories is a functor $\mathcal{F}^C:\mathbf{C} \rightarrow \mathbf{C}'$ and a natural transformation $\mathcal{F}_{A,B}:[A,B] \rightarrow [\mathcal{F}^C A, \mathcal{F}^C B]$ satisfying: \[  \tikzfigscale{0.7}{figs/vfunct_1}.  \] A $\mathbf{V}_{\cong}$ functor is simply a $\mathbf{V}_{\kappa}$ functor between $\mathbf{V}_{\cong}$ categories. 
\end{definition}

\begin{definition}
A $\mathbf{V}_{\kappa}$-natural transformation between $\mathbf{V}_{\kappa}$ functors $\eta : \mathcal{F} \Rightarrow \mathcal{G}$ is a natural transformation $\eta^C: \mathcal{F}^C \Rightarrow \mathcal{G}^{C}$ satisfying: \[  \tikzfigscale{0.7}{figs/vnat_1} . \] A $\mathbf{V}_{\cong}$ natural transformation is a $\mathbf{V}_{\kappa}$ natural transformation between $\mathbf{V}_{\cong}$ functors. 
\end{definition}

All together $\mathbf{V}_{\kappa}\mathbf{Cat}$ defines a $2$-category, with composition of $\mathbf{V}_{\kappa}$-functors $(\mathcal{F} \circ \mathcal{G})_{A,B} := \mathcal{F}_{A,B} \circ \mathcal{G}_{A,B}$. The $2$-category $\mathbf{V}_{\kappa}\mathbf{Cat}$ is furthermore monoidal, with the monoidal product (which we will refer to as $\boxtimes_{\mathbf{V}}$) of a $\mathbf{V}_{\kappa}$ category $\mathbf{C}$ with a $\mathbf{V}_{\kappa}$ category $\mathbf{D}$ defined as the $\mathbf{V}_{\kappa}$ category $\mathbf{C} \times \mathbf{D}$ with:
\begin{itemize}
\item hom functor given by $[-,=]_{\mathbf{C}} \otimes_{\mathbf{V}} [-,=]_{\mathbf{D}}$.
\item natural transformation $\circ^{\mathbf{C} \otimes_{\mathbf{V}} \mathbf{D}} := \circ^{\mathbf{C}} \otimes_{\mathbf{V}} \circ^{\mathbf{D}}$ (up to a swap).
\item natural transformation $\kappa(f,g) :=(\kappa^{\mathbf{C}}(f) \otimes \kappa^{\mathbf{D}}(g)) \circ \lambda_{I}$.
\end{itemize}
The unit object of $\mathbf{V}_{\kappa} \mathbf{Cat}$ is given by the $(\mathbf{V}_{\kappa}$ category $\{ \bullet \}$ (the singleton category) with enrichment given by $[\bullet,\bullet] := I_{\mathbf{V}}$. All together the monoidal and $2$ categorical structures of $\mathbf{V}_{\kappa} \mathbf{Cat}$ are compatible in the sense that $\mathbf{V}_{\kappa} \mathbf{Cat}$ forms a symmetric monoidal $2$ category. For the following let $\mathbf{SymMonCat}$ be the category of symmetric monoidal categories and symmetric monoidal functors, and let $\mathbf{SymMon2Cat}$ be the category of symmetric monoidal $2$ categories and the symmetric (lax) monoidal $2$ functors between them. 

\begin{lemma}
There is a functor $\mathcal{M}^{\kappa}: \mathbf{SymMonCat} \rightarrow \mathbf{SymMon2Cat}$ which sends every symmetric lax monoidal functor $\mathcal{F}: \mathbf{V} \rightarrow \mathbf{V}'$ to a symmetric lax monoidal $2$-functor $\mathcal{M}^{\kappa}(\mathcal{F}): \mathbf{V}_{\kappa}\mathbf{Cat} \rightarrow \mathbf{V}_{\kappa}' \mathbf{Cat} $.
\end{lemma}
\begin{proof}
This is a minor extension of the result of \cite{Cruttwell2008NormedCategories} in which it is shown that the change of base functor for standard enriched categories $\mathcal{M}_{*}(\mathcal{F}):\mathbf{VCat} \rightarrow \mathbf{V'Cat}$ is a monoidal $2$-functor. 

Let $\mathcal{F}$ be symmetric lax monoidal, with natural transformation $x_{A,A'}: \mathcal{F}(A) \otimes \mathcal{F}(A') \rightarrow \mathcal{F}(A \otimes A')$ and morphism $y: I_{V'} \rightarrow \mathcal{F}(I_V)$. We define $\mathcal{M}^{\kappa}(\mathcal{F})$ to send each $\mathbf{V}_{\kappa}$ category $\mathbf{C}$ to the new $\mathbf{V}'_{\kappa}$ category $\mathbf{C}$ with hom functor $\mathcal{F}[-,=]$ along with sequential composition morphisms given by $\mathcal{F}(\circ_{ABC}) \circ x_{[A,B][B,C]}$ and natural transformation $\kappa^{\mathcal{F}}: \mathbf{C}(A,B) \rightarrow \mathbf{V}'(I_{\mathbf{V}'} , \mathcal{F}^{\mathbf{V}}[A,B]) $ given by $\mathcal{F} (\kappa(-)) \circ y$. 

On morphisms define $ \mathcal{M}^{\kappa}(\mathcal{F})(\mathcal{G})^{\mathbf{C}} := \mathcal{G}^{\mathbf{C}}$ and define $ \mathcal{M}^{\kappa}(\mathcal{F})(\mathcal{G})_{AB} := \mathcal{F}(\mathcal{G}_{AB})$, functorality of $\mathcal{M}^{\kappa}(\mathcal{F})(-)$ is then directly inherited from functorality of $\mathcal{F}$. On $2$-morphisms define $ \mathcal{M}^{\kappa}(\mathcal{F})(\eta) := \eta$, functorality is then immediate and the $\mathbf{V}'_{\kappa}$ naturality of $\eta$ follows from $\mathbf{V}_{\kappa}$ naturality of $\eta$ and the functorality of $\mathcal{F}$. For $\mathcal{M}^{\kappa}(\mathcal{F})(-)$ to be lax monoidal construct the required natural transformation $\mathcal{X} : \mathcal{M}^{\kappa}(\mathcal{F})(-) \boxtimes_V  \mathcal{M}^{\kappa}(\mathcal{F})(=) \Rightarrow \mathcal{M}(\mathcal{F})(- \boxtimes_{V} =)$ must for every pair of $\mathbf{V}_{\kappa}$ categories $\mathbf{C},\mathbf{D}$ give a $\mathbf{V}_{\kappa}'$ functor $\mathcal{X}^{\mathbf{C}, \mathbf{D}}$, this $\mathbf{V}_{\kappa}'$ functor is defined component-wise by $\mathcal{X}^{\mathbf{C}, \mathbf{D}}_{(A,B),(A',B')} : = x_{[A,A'][B,B']}$. On the underlying categories $\mathbf{C} \times \mathbf{D}$ then  $\mathcal{X}^{\mathbf{C}, \mathbf{D}}$ is defined to be the identity, with the $\mathbf{V}_{\kappa}$ compatibility law inherited from the unitality law for $\mathcal{F}$. That $\mathcal{X}$ is a $\mathbf{V}'_{\kappa}$-functor follows from naturality of $x$ and the associativity law for $\mathcal{X}$ is inherited from associativity of $x$. The required morphism $\mathcal{Y}: I_{\mathbf{V}'_{\kappa} \mathbf{Cat}} \rightarrow \mathcal{M}^{\kappa}(\mathcal{F})(I_{\mathbf{V}_{\kappa} \mathbf{Cat}})$ is constructed by $\mathcal{Y}_{\bullet, \bullet} :=  y$. The symmetry for $\mathcal{M}^{\kappa}(\mathcal{F})$ is inherited from the symmetry of $\mathcal{F}$.

Finally, note that $\mathcal{M}^{\kappa}(\mathcal{G} \circ \mathcal{F}) = \mathcal{M}^{\kappa}(\mathcal{G})  \circ \mathcal{M}^{\kappa}(\mathcal{F})$. Indeed, when applied to some $\mathbf{V}_{\kappa}$-smc $\mathbf{C}$ they return the same $\mathbf{V}^{''}_{\kappa}$-smc $\mathbf{C}$ as is constructed by applying first $\mathcal{M}^{\kappa}(\mathcal{F})$ and then $\mathcal{M}^{\kappa}(\mathcal{F})$. Both return a final hom functor $\mathcal{G} \mathcal{F} [-,=]$, both return the same composition natural transformation since $x_{MN}^{\mathcal{G} \mathcal{F}} = \mathcal{G}(x_{MN}^{\mathcal{F}}) \circ x_{\mathcal{F} M \mathcal{F}N}^{\mathcal{G}}$, and so $\mathcal{G}\mathcal{F}(\circ_{ABC}) \circ x_{[A,B][B,C]}^{\mathcal{G} \mathcal{F}} = \mathcal{G}(\mathcal{F}(\circ_{ABC}) \circ x_{[A,B][B,C]}^{\mathcal{F}}) \circ x_{\mathcal{F}[A,B]\mathcal{F}[B,C]}^{\mathcal{G}}$ by functorality of $\mathcal{G}$. Finally, both return the same natural transformation $\kappa^{''}$ since $y^{\mathcal{G} \mathcal{F}} = \mathcal{G}(y^{\mathcal{F}}) \circ y^{\mathcal{G}}$ and so $\mathcal{G}\mathcal{F}(\kappa(f)) \circ y^{\mathcal{G} \mathcal{F}} = \mathcal{G}(\mathcal{F}(\kappa(f)) \circ y^{\mathcal{F}}) \circ y^{\mathcal{G}}$ by functorality of $\mathcal{G}$. Finally, on morphisms $\mathcal{M}^{\kappa}(\mathcal{G} \circ \mathcal{F})(\mathcal{H})_{AB} = \mathcal{}G \mathcal{F} (\mathcal{H}_{AB}) = \mathcal{M}^{\kappa}(\mathcal{G})(\mathcal{M}^{\kappa}(\mathcal{F})(H))_{AB}$. 
\end{proof}
One can construct freely construct an entire $2$-category of symmetric pseudomonoids \cite{paddy_coalgebroids, lucy_relativesymmetric} $\mathbf{SymPsMon}[\mathbf{VCat}_{\kappa}]$, this will provide a higher-level way to see that pm-functors as written explicitly without coherences in the main text can be defined and composed.
\begin{lemma}
There is a functor $\mathcal{M}^{\kappa}_{smc}:\mathbf{SymMonCat} \rightarrow \mathbf{2Cat}$ which assigns to each $\mathcal{F}:\mathbf{V} \rightarrow \mathbf{V}'$ a $2$-functor $\mathcal{M}^{\kappa}_{smc}(\mathcal{F}):\mathbf{SymPsMon}[\mathbf{V}_{\kappa} \mathbf{Cat}] \rightarrow \mathbf{SymPsMon}[\mathbf{V'}_{\kappa}\mathbf{Cat}]$.
\end{lemma}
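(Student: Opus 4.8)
The plan is to realise $\mathcal{M}_{smc}$ as the composite of two functorial constructions that are already essentially available: the change-of-base assignment $\mathcal{M}$ of the previous lemma, and the ``symmetric pseudomonoid'' $2$-functor $\mathbf{SymPsMon}[-]$. Concretely, I would set $\mathcal{M}_{smc}(\mathcal{F}) := \mathbf{SymPsMon}[\mathcal{M}(\mathcal{F})]$ and then verify that this is well defined and functorial in $\mathcal{F}$.

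First I would upgrade the previous lemma from the monoidal to the symmetric case: for $\mathcal{F}:\mathbf{V}\to\mathbf{V}'$ a \emph{symmetric} monoidal functor, $\mathcal{M}(\mathcal{F}):\mathbf{VCat}_{\cong}\to\mathbf{V'Cat}_{\cong}$ is a \emph{symmetric} monoidal $2$-functor. Since $\mathcal{M}=\mathcal{Y}_{\mathbf{V}'}\circ\mathcal{M}_{*}\circ\mathcal{X}_{\mathbf{V}}$, this reduces to the corresponding statement for the standard change of base $\mathcal{M}_{*}(\mathcal{F}):\mathbf{VCat}\to\mathbf{V'Cat}$, which is treated in the braided and symmetric cases in \cite{Cruttwell2008NormedCategories}; as $\mathcal{X}_{\mathbf{V}}$ and $\mathcal{Y}_{\mathbf{V}'}$ are themselves symmetric monoidal $2$-functors, the composite is too.

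Next I would invoke the $2$-functoriality of $\mathbf{SymPsMon}[-]$. The key fact, established in \cite{paddy_coalgebroids, lucy_relativesymmetric}, is that a strong symmetric monoidal $2$-functor $G:\mathbf{X}\to\mathbf{Y}$ preserves symmetric pseudomonoids: given $(A,m,u,\alpha,\lambda,\rho,\sigma)$ in $\mathbf{X}$, its image carries multiplication $Gm\circ\chi_{A,A}$ and unit $Gu\circ\iota$, with associator, unitors and symmetry obtained by pasting $G$ applied to the original $2$-cells with the coherence constraints $\chi,\iota$ of $G$. Crucially the symmetry axiom transports precisely because $G$ is compatible with the braidings of $\mathbf{X}$ and $\mathbf{Y}$; this is exactly where I need $\mathcal{M}(\mathcal{F})$ to be symmetric and not merely monoidal, which is why the hypothesis $\mathcal{F}\in\mathbf{SymMonCat}$ is used. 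The same pasting transports homomorphisms of symmetric pseudomonoids and the $2$-cells between them, yielding a $2$-functor $\mathbf{SymPsMon}[G]$.

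Finally I would check functoriality of $\mathcal{M}_{smc}$ in $\mathcal{F}$, which is formal once the two ingredients above are in place: $\mathcal{M}$ is already functorial and $\mathbf{SymPsMon}[-]$ respects composition and identities of symmetric monoidal $2$-functors, so $\mathcal{M}_{smc}(\mathcal{G}\circ\mathcal{F})=\mathbf{SymPsMon}[\mathcal{M}(\mathcal{G})\circ\mathcal{M}(\mathcal{F})]=\mathbf{SymPsMon}[\mathcal{M}(\mathcal{G})]\circ\mathbf{SymPsMon}[\mathcal{M}(\mathcal{F})]=\mathcal{M}_{smc}(\mathcal{G})\circ\mathcal{M}_{smc}(\mathcal{F})$, and likewise on identities. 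The main obstacle is thus not the assembly but the two preservation statements it rests on: the symmetric refinement of the change-of-base lemma and the coherent preservation of symmetric pseudomonoids by symmetric monoidal $2$-functors. Both are available in the cited literature, so the remaining work is to confirm that the braiding-compatibility hypotheses line up, i.e.\ that $\mathcal{M}(\mathcal{F})$ genuinely inherits its symmetry from that of $\mathcal{F}$.
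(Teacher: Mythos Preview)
Your proposal is correct and follows essentially the same approach as the paper: define $\mathcal{M}_{smc}(\mathcal{F}):=\mathbf{SymPsMon}[\mathcal{M}(\mathcal{F})]$, first upgrading $\mathcal{M}(\mathcal{F})$ to a symmetric monoidal $2$-functor by inheriting the braiding coherence from $\mathcal{F}$, and then applying the $2$-functor $\mathbf{SymPsMon}:\mathbf{SM2Cat}\to\mathbf{2Cat}$ from \cite{lucy_relativesymmetric}. Your write-up is in fact more detailed than the paper's, spelling out why the symmetry of $\mathcal{F}$ is needed and checking functoriality in $\mathcal{F}$ explicitly.
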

\begin{proof}
The $2$-functor $\mathbf{SymPsMon}:\mathbf{SymMon2Cat} \rightarrow  \mathbf{2Cat}$ constructed in \cite{lucy_relativesymmetric} can be applied to return $\mathcal{M}^{\kappa}_{smc}(\mathcal{F}) := \mathbf{SymPsMon}[\mathcal{M}^{\kappa}(\mathcal{F})]:\mathbf{SymPsMon}[\mathbf{V}_{\kappa}\mathbf{Cat}] \rightarrow \mathbf{SymPsMon}[\mathbf{V'}_{\kappa} \mathbf{Cat}]$.
\end{proof}

We can now extend the definition of a $\mathbf{V}_{\kappa}$ enriched category to define a $\mathbf{V}_{\kappa}$ symmetric monoidal category ($\mathbf{V}_{\kappa}$-smc for short) as exactly a symmetric pseudomonoid \cite{DAY199799, paddy_coalgebroids} in $\mathbf{V}_{\kappa} \mathbf{Cat}$. We also make use of the monoidal structure of $\mathbf{V}_{\kappa} \mathbf{Cat}$ to define $\mathbf{V}_{\cong}$-smc's. \begin{definition}
A $\mathbf{V}_{\cong}$-smc $\mathbf{C}$ is a symmetric pseudomonoid \cite{paddy_coalgebroids} on a $\mathbf{V}_{\cong}$ category $\mathbf{C}$ in $\mathbf{V}_{\kappa} \mathbf{Cat}$.
\end{definition}
This definition of a $\mathbf{V}_{\cong}$-smc $\mathbf{C}$ as a symmetric pseudomonoid recovers the definition of a $\mathbf{V}_{\cong}$-smc $\mathbf{C}$ as given in the main text except in that it also allows for non-strictness of monoidal categories and the functors between them. To be precise, the definition given in the main text is in-fact what is recovered when in the special case that the symmetric pseudomonoid is in fact a monoid. The advantage of the rephrasing in terms of pseudomonoids, is that it neatly packages coherences for us. 

The Grothendieck construction sends any functor of type $\mathcal{M}:\mathbf{D} \rightarrow \mathbf{Cat}$ to a category $\mathcal{G}(\mathcal{M})$ with objects given by pairs $(d,c)$ where $d$ is an object of $\mathbf{D}$ and $c$ is an object of $\mathbf{M}(d)$. Morphisms $(d_1,c_1) \rightarrow (d_2,c_2)$ are given by pairs $(f,g)$ of morphisms $f:d_1 \rightarrow d_2$ and $g:\mathcal{M}(f)(c_1) \rightarrow c_2$. 

\begin{theorem}
The restriction of $\mathcal{G}(\mathcal{M}^{\kappa}_{smc})$ to objects $(\mathbf{V}, x)$ in which $x$ is a pseudomonoid on a $\mathbf{V}_{\cong}$ category, has objects given by $\mathbf{V}_{\cong}$-smc's and morphisms given by pm-functors. Concretely, a morphism in this category is given by
 \begin{itemize}
    \item A symmetric monoidal functor $\mathcal{F}^{V}:\mathbf{V} \rightarrow \mathbf{V}'$
    \item A symmetric monoidal functor $\mathcal{F}^C: \mathbf{C} \rightarrow \mathbf{C}'$
    \item A family of morphisms $\mathcal{F}_{AB}:\mathcal{F}^V [A,B] \rightarrow [\mathcal{F}^CA,\mathcal{F}^C B]$
\end{itemize}
Which together satisfy \[   \tikzfig{figs/smsc_monhom_r_nf},  \] where $x^{\mathbf{C}}:\mathcal{F}^{\mathbf{C}}(-) \otimes \mathcal{F}^{\mathbf{C}}(=) \rightarrow \mathcal{F}^{\mathbf{C}}(- \otimes =)$ and $y: I_{\mathbf{C'}} \rightarrow \mathcal{F}^{\mathbf{C}}(I_{\mathbf{C}})$ are the structural isomorphisms which make $\mathcal{F}^{\mathbf{C}}$ strong monoidal (and similarly for $x^{\mathbf{V}} , y^{\mathbf{V}} , \mathcal{F}^{\mathbf{V}}$). Therefore, $\mathbf{PM}$ is a category. 
\end{theorem}
Note that when both $\mathcal{F}^{\mathbf{V}}$ and $\mathcal{F}^{\mathbf{C}}$ are strict this exactly recovers the simplified definition of the main text. 
\begin{proof}
Note that $\mathcal{G}(\mathcal{M}^{\kappa}_{smc})$ has for each object a pair of an smc $\mathbf{V}$ and a $\mathbf{V}_{\kappa}$-smc $\mathbf{C}$ and for each morphism a pair of a symmetric monoidal functor $\mathcal{F}^V : \mathbf{V} \rightarrow \mathbf{V}'$ and a symmetric monoidal morphism\footnote{For the explicit definition of symmetric monoidal morphism between symmetric psuedomonoids see  \cite{paddy_coalgebroids}.} from $\mathcal{M}(\mathcal{F}^V)(\mathbf{V},\mathbf{C})$ to $ (\mathbf{V}',\mathbf{C}')$. Explicitly, a symmetric monoidal morphism is in this case a $\mathbf{V}_{\kappa}$ functor from $\mathcal{M}(\mathcal{F}^V)(\mathbf{V},\mathbf{C})$ to $ (\mathbf{V}',\mathbf{C}')$ satisfying additional (monoidal) properties. The bare $\mathbf{V}_{\kappa}$ functor part gives a symmetric lax monoidal functor $\mathcal{F}^{C}: \mathbf{C} \rightarrow \mathbf{C}'$ and a family of morphisms $\mathcal{F}_{A,B}:\mathcal{F}[A,A'] \rightarrow [\mathcal{F}^C A, \mathcal{F}^C A']$. Note that the structure of $\mathcal{M}(\mathcal{F}^V)(\mathbf{V},\mathbf{C})$ is outlined in lemma $5$, it has objects given by those of the $\mathbf{V}_{\kappa}$-smc $\mathbf{C}$, and has underlying category $\mathbf{C}$ with natural transformation given by  $\mathcal{F}^V (\kappa(-)) \circ y$ and sequential composition maps given by $\mathcal{F}^V(\circ_{ABC}) \circ x_{[A,B][B,C]}$.  The first and second laws for $\mathbf{V}_{\kappa}$ functors then recover the first and second laws for pm-functors. 
Now the additional $2$-morphisms in $\mathbf{V}_{\kappa} \mathbf{Cat}$ which give this $\mathbf{V}_{\kappa}$-fucntor the structure of a symmetric monoidal morphism gives $\mathcal{F}^{\mathbf{C}}$ the structure of a symmetric monoidal morphism in $\mathbf{SymPsMon}[\mathbf{Cat}]$ (and so the structure of a symmetric strong monoidal functor), but furthermore $2$-morphisms in $\mathbf{V}_{\kappa}$ are required to satisfy the additional compatibility law of definition $10$. This law requires that $[i,x^{\mathbf{C}}_{A'B'}] \circ  \otimes_{AA'BB'}' \circ (\mathcal{F}_{A,A'} \otimes_{V} \mathcal{F}_{B,B'}) = [x^{\mathbf{C}}_{AB}, I] \circ \mathcal{F}_{A \otimes B,A' \otimes B'} \circ \mathcal{F}^{V}(\otimes_{AA'BB'}) \circ x_{[A,A'][B,B']}$ the final law for pm-functors. 
\end{proof}




\section{PM morphisms between layers of the Grothendieck Construction}

\begin{lemma}
Given any $\mathbf{C}^3_{\cong}$ monoidal category $\mathbf{C}^2$ and $\mathbf{C}^2_{\cong}$ monoidal category $\mathbf{C}^1$ there exists a morphism $\Gamma$ between them in $\mathbf{PM}$ given by given by $\Gamma := (\mathcal{R}_{2}^{3}, \mathcal{R}_{1}^{2}, \gamma^i)$ where $\gamma^{i}_{A,B}:\mathcal{R}_{2}^{3}([A,B]^2) \rightarrow [\mathcal{R}_{1}^2(A),\mathcal{R}_1^2 (B)]^3 $ is given by \begin{equation}
    \tikzfig{figs/smsc_monhom_2}
\end{equation}
\end{lemma}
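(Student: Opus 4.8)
The plan is to verify that $\Gamma := (\mathcal{R}_2^3, \mathcal{R}_1^2, \gamma)$ satisfies the axioms of a morphism of $\mathbf{PM} = \mathcal{G}(\mathcal{M}_{smc})$. By the description of the Grothendieck construction recalled above, such a morphism from the $\mathbf{C}^2_{\cong}$-smc $\mathbf{C}^1$ to the $\mathbf{C}^3_{\cong}$-smc $\mathbf{C}^2$ consists of a symmetric monoidal functor $\mathcal{R}_2^3 : \mathbf{C}^2 \to \mathbf{C}^3$ together with a $\mathbf{C}^3_{\cong}$-enriched symmetric monoidal functor from the change of base $\mathcal{M}(\mathcal{R}_2^3)$ of $\mathbf{C}^1$ into $\mathbf{C}^2$. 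Since $\mathcal{R}_2^3 = [I,-]^3$ and $\mathcal{R}_1^2 = [I,-]^2$ are the raising functors, which are already known to be full, faithful and (braided) monoidal, the only genuine content is to check that the family $\gamma$ upgrades $\mathcal{R}_1^2$ to an enriched symmetric monoidal functor, i.e. that it satisfies the coherence pasted into the definition of a pm-functor.

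First I would fix the explicit form of $\gamma$. Writing the composition morphism $\bigcirc_{IAB} : [I,A]^2 \otimes_2 [A,B]^2 \to [I,B]^2$ of the $\mathbf{C}^2$-enrichment as a state $\kappa(\bigcirc_{IAB}) : I \to [[I,A]^2 \otimes_2 [A,B]^2,\, [I,B]^2]^3$ in $\mathbf{C}^3$, the component is
\[ \gamma_{A,B} \;=\; \Delta \circ \big(\mathrm{id}_{[I,[A,B]^2]^3} \otimes_3 \kappa(\bigcirc_{IAB})\big) \]
up to the right unitor, with the types recorded in the note following the statement. Conceptually $\gamma_{A,B}$ sends a raised process to its post-composition action, so that it is the $\mathbf{C}^3$-level witness of the action $[I,f]^2$ of $\mathcal{R}_1^2$ on a morphism $f$; establishing this identification, using the special property of $\Delta$ at $A=I$, is the bridge that turns the abstract coherences into statements about $\bigcirc$.

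The coherences to verify are then: (i) naturality of $\gamma_{A,B}$ in $A$ and $B$; (ii) compatibility with sequential composition, $\gamma_{AC} \circ \mathcal{R}_2^3(\bigcirc^{2}_{ABC}) = \bigcirc^{3} \circ (\gamma_{AB} \otimes_3 \gamma_{BC})$ (enriched functoriality, up to the monoidal comparison of $\mathcal{R}_2^3$); (iii) compatibility with the unit/identity data $\kappa$; (iv) the enriched monoidal functor axiom relating $\gamma$ to the parallel-composition morphisms $\otimes_{AA'BB'}$; and (v) the symmetry coherence. My strategy for each is uniform: expand $\Delta$ through its definition $\Delta = \bigcirc \circ (\otimes \otimes \mathrm{id})$, and then reduce the resulting diagram in $\mathbf{C}^3$ to the associativity and unitality of the two composition morphisms $\bigcirc^2, \bigcirc^3$, the interchange law between $\bigcirc$ and $\otimes$, and the collapse of partial insertion at $A=I$ into plain composition. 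Items (i), (iii) and (v) are essentially bookkeeping once $\gamma$ is recognised as post-composition.

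The hard part will be (ii), the compatibility with composition, since it is exactly the statement that post-composition is functorial, i.e. that associativity of composition in $\mathbf{C}^1$ holds after being lifted through two successive levels of enrichment. Here I would rewrite the nested $\Delta$/$\bigcirc$ expression on each side using associativity of $\bigcirc^2$ inside the state $\kappa(\bigcirc_{IAB})$ and then apply the interchange law to slide the two insertions past one another; the same interchange identity is the key input for the monoidal axiom (iv). Once (i)--(v) are in place, the fact that $\mathbf{PM}$ is genuinely the Grothendieck construction of the $2$-functor $\mathcal{M}_{smc}$ guarantees that $\Gamma$ is a well-defined morphism, and that such morphisms compose, so no further compatibility between the two functors needs separate checking.
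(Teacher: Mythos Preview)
Your proposal is correct and takes essentially the same approach as the paper: unpack the definition of a morphism in $\mathbf{PM}$ into the three displayed coherences (state preservation, compatibility with $\bigcirc$, compatibility with $\otimes$) and verify each by expanding $\gamma$ through $\Delta$ and $\kappa(\bigcirc_{IAB})$, then reducing to associativity, unitality, and interchange for the enriched structure of the $\mathbf{C}^i$. Your conditions (ii), (iii), (iv) are exactly the paper's three checks, with (i) and (v) absorbed into the book-keeping the paper suppresses by working strictly; the identification of associativity of $\bigcirc^2$ as the engine for (ii) and interchange for (iv) matches the paper's diagrammatic proofs. One small overclaim: the raising functors $[I,-]$ are not in general full and faithful without the separate faithful-usage hypothesis, but since your argument only uses that they are (symmetric) monoidal, this does not affect the proof.
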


\begin{proof}
Including the isomorphisms associated to the monoidal functors in the definition of a pm-morphisms as in Appendix A a pm-functor must satisfy: \[   \tikzfig{figs/smsc_monhom_r_nf}.  \]
For the first state preservation condition: 
\begin{equation}
    \tikzfig{figs/state_proof_0} \quad = \quad      \tikzfig{figs/state_proof_1} \quad = \quad      \tikzfig{figs/state_proof_2} \quad = \quad      \tikzfig{figs/state_proof_3} \quad = \quad  \tikzfig{figs/state_proof_4} 
\end{equation}
For the second condition for pm-functors: preservation of sequential composition. 
\begin{equation}
    \tikzfig{figs/prove_enrich_2a0} \quad = \quad  \tikzfig{figs/prove_enrich_2a1} \quad = \quad   \tikzfig{figs/prove_enrich_2a2} \quad = \quad   \tikzfig{figs/prove_enrich_2a3} \quad = \quad   \tikzfig{figs/prove_enrich_2a4} 
\end{equation}
\begin{equation}
 \quad = \quad   \tikzfig{figs/prove_enrich_2a5} \quad = \quad   \tikzfig{figs/prove_enrich_2a6} \quad = \quad  \tikzfig{figs/prove_enrich_2a6b}  
\end{equation}
\begin{equation}
\quad = \quad    \tikzfig{figs/prove_enrich_2a6c}  \quad = \quad   \tikzfig{figs/prove_enrich_2a6d} \quad = \quad   \tikzfig{figs/prove_enrich_2a6e} \end{equation}
\begin{equation}
\quad = \quad   \tikzfig{figs/prove_enrich_2a7} \quad = \quad   \tikzfig{figs/prove_enrich_2a8}  \quad = \quad   \tikzfig{figs/prove_enrich_2a9} 
\end{equation}
Then the third condition, for preservation of parallel composition:
\begin{equation}
    \tikzfig{figs/prove_enrich_par_end}  \quad = \quad \tikzfig{figs/prove_enrich_par_1}  \quad = \quad \tikzfig{figs/prove_enrich_par_2b} 
\end{equation}
\begin{equation}
   \quad = \quad \tikzfig{figs/prove_enrich_par_2c} \quad = \quad    \tikzfig{figs/prove_enrich_par_3} 
   \end{equation}
   \begin{equation}
   \quad = \quad \tikzfig{figs/prove_enrich_par_3b} \quad = \quad    \tikzfig{figs/prove_enrich_par_3c}
    \end{equation}
    
    \begin{equation}
   \quad = \quad \tikzfig{figs/prove_enrich_par_3d} \quad = \quad    \tikzfig{figs/prove_enrich_par_3e}
    \end{equation}
    
        \begin{equation}
   \quad = \quad \tikzfig{figs/prove_enrich_par_3f} \quad = \quad    \tikzfig{figs/prove_enrich_par_3g}
    \end{equation}
    
            \begin{equation}
   \quad = \quad \tikzfig{figs/prove_enrich_par_3h} \quad = \quad    \tikzfig{figs/prove_enrich_par_3i}
    \end{equation}
    
                \begin{equation}
   \quad = \quad \tikzfig{figs/prove_enrich_par_3j} \quad = \quad    \tikzfig{figs/prove_enrich_par_3k}
    \end{equation}
    
                \begin{equation}
   \quad = \quad \tikzfig{figs/prove_enrich_par_3l} \quad = \quad    \tikzfig{figs/prove_enrich_par_3m}
    \end{equation}
    
                \begin{equation}
   \quad = \quad \tikzfig{figs/prove_enrich_par_3n} \quad = \quad    \tikzfig{figs/prove_enrich_par_3o}
    \end{equation}
    
                \begin{equation}
   \quad = \quad \tikzfig{figs/prove_enrich_par_3p} \quad = \quad    \tikzfig{figs/prove_enrich_par_3q}
    \end{equation}
    
\begin{equation}
  \quad = \quad   \tikzfig{figs/prove_enrich_par_3r}  \quad = \quad \tikzfig{figs/prove_enrich_par_begin}\end{equation}

\end{proof}

\section{Linked faithful categories are Closed Monoidal}
\begin{lemma}
A category $\mathbf{C}$ is closed symmetric monoidal if and only if $\mathbf{C}$ is a symmetric monoidal and furthermore
\begin{itemize}
    \item $\mathbf{C}$ is linked
    \item $\mathbf{C}$ has faithful usage
\end{itemize}
\end{lemma}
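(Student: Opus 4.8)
The plan is to prove both implications, with the reverse direction (a faithful linked self-enrichment yields closed structure) carrying essentially all of the content. Throughout, the two maps to compare are the \emph{uncurrying} $\Lambda_{C} : \mathbf{C}(C,[A,B]) \to \mathbf{C}(A \otimes C, B)$ sending $S$ to $\texttt{eval}_{A,B} \circ (1_A \otimes S)$, and a \emph{currying} $f \mapsto \hat f$ going the other way; establishing that these are mutually inverse bijections, natural in $C$, is exactly the co-universal property required by the definition of closed monoidal category recalled in the preliminaries.

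For the forward direction, suppose $\mathbf{C}$ is closed symmetric monoidal. Then $\mathbf{C}$ is self-enriched in the standard way: the internal homs $[A,B]$ carry the sequential and parallel composition morphisms, and $\kappa$ is the closed-structure bijection $\mathbf{C}(A,B) \cong \mathbf{C}(I \otimes A, B) \cong \mathbf{C}(I,[A,B])$. The link $\eta_A : A \to [I,A]$ is the canonical isomorphism $A \cong [I,A]$ (by Yoneda, $\mathbf{C}(X,[I,A]) \cong \mathbf{C}(X \otimes I, A) \cong \mathbf{C}(X,A)$), and it is monoidal-natural. Faithful usage is then immediate: the uncurrying $\Lambda_C$ is a bijection by the defining adjunction $(A \otimes -) \dashv [A,-]$, hence injective, and the identity recorded below shows that injectivity of $\Lambda_C$ is equivalent to the monomorphism condition on $\theta$.

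For the reverse direction, I would first take as $\texttt{eval}_{A,B}$ the morphism drawn in the lemma statement, namely $\texttt{eval}_{A,B} := \eta_B^{-1} \circ \bigcirc_{IAB} \circ (\eta_A \otimes 1_{[A,B]}) : A \otimes [A,B] \to B$, using the link to reinterpret $A$ as $[I,A]$ and then composing sequentially. A short rewrite using naturality of $\eta$ gives, for any $S : C \to [A,B]$, the identity $\texttt{eval}_{A,B} \circ (1_A \otimes S) = \eta_B^{-1} \circ \theta_{BC}(S) \circ (\eta_A \otimes 1_C)$, where $\theta_{BC}(S) = \bigcirc_{IAB} \circ (1_{[I,A]} \otimes S)$ is precisely the usage transformation. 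Since $\eta_A,\eta_B$ are isomorphisms, $\Lambda_C$ is injective exactly when $\theta_{BC}$ is, so the faithful-usage hypothesis (equivalently, injectivity of $\bigcirc_{IAB}$ in the sense of the diagram for $\theta$) yields \emph{uniqueness} of $\hat f$.

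For existence of $\hat f$, I would build the currying from the partial-insertion morphism $\Delta$. Specialising $\Delta : [A',X] \otimes [Y \otimes X, Z] \to [Y \otimes A', Z]$ at $A' = I$, $X = C$, $Y = A$, $Z = B$ gives $\Delta : [I,C] \otimes [A \otimes C, B] \to [A,B]$; feeding its second slot the state $\kappa(f) : I \to [A \otimes C, B]$ and its first slot the link $\eta_C : C \to [I,C]$ produces $\hat f := \Delta \circ (\eta_C \otimes \kappa(f)) : C \to [A,B]$ (up to unitor). The crux is then to verify the triangle $\texttt{eval}_{A,B} \circ (1_A \otimes \hat f) = f$; this is where the special identity for $\Delta$ at $A'=I$ (the collapse of partial insertion into ``plug a state into the inserted slot'', shown earlier) combines with associativity and unitality of $\bigcirc$ and the interchange law to return exactly $f$. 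I expect this diagrammatic reconciliation to be the main obstacle, since it requires threading the link isomorphisms through the enriched coherences. Granting it, existence and uniqueness of $\hat f$ supply the co-universal arrow, and naturality of the resulting bijection $\mathbf{C}(A \otimes C, B) \cong \mathbf{C}(C,[A,B])$ follows formally from the universal property, so $\mathbf{C}$ is closed symmetric monoidal.
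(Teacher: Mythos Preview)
Your proposal is correct and follows essentially the same route as the paper: the evaluation is built as $\eta_B^{-1}\circ\bigcirc_{IAB}\circ(\eta_A\otimes 1)$, existence of $\hat f$ comes from feeding $\kappa(f)$ and the link $\eta_C$ into the partial insertion $\Delta$ specialised at the unit, and uniqueness is exactly the faithful-usage hypothesis; the converse direction unpacks the standard self-enrichment of a closed SMC. The only point where you are slightly more cavalier than the paper is the triangle identity $\texttt{eval}\circ(1\otimes\hat f)=f$, which you flag as ``the main obstacle'' but do not carry out---the paper dispatches it in one diagrammatic step using precisely the $A=I$ identity for $\Delta$ that you cite, so no additional ideas are needed.
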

\begin{proof}
We begin by showing that the above two bullet points give closed monoidal structure. Let the three bullet points be true for $\mathbf{C}$, then to each pair $A,B \in \mathbf{C}$ assign the candidate for evaluation 
\begin{equation}
    \texttt{eval} : = \tikzfig{figs/main_thm_1_a}
\end{equation}
Since every $\circ$ is completely injective by assumption, so is every $\texttt{eval}$. Since $\eta : A \rightarrow [I,A]$ is a natural isomorphism for any $f \in \mathbf{C}(A,B)$ there exists a morphism $\kappa{f}$ such that 
\begin{equation}
    \texttt{eval} : = \tikzfig{figs/main_thm_2_a}
\end{equation}
One can apply the isomorphism $\eta$ to the partial insertion operation to generate a partial insertion using a lower level type $Y$ as opposed to the higher level type $[I,Y]$\footnote{This proof idea is also used by the authors in \cite{wilsoncausality}.}.
\begin{equation}
    \tikzfig{figs/smsc_partial_1}
\end{equation}
This partial insertion operation can be used to construct the curried version of any process $f$ from its static version $\kappa(f)$,  since 
\begin{equation}
    \tikzfig{figs/smsc_partial_2}
\end{equation}
It follows that for every process $f$ its curried version exists, that is, the co-universal arrow definition of a closed symmetric monoidal category is satisfied.

Now we demonstrate the converse. Let $\mathbf{C}$ be a closed SMC, then there exist sequential and parallel composition morphisms defined as adjuncts to circuits of evaluation morphisms. Concretely the definition of closed monoidal category enforces that there must exist processes $\otimes$ and $\circ$ satisfying, 
\begin{equation}
    \tikzfig{figs/smsc_comppar}
\end{equation}
which satisfy the coherence conditions for a symmetric monoidal category. The uniqueness property for co-universal arrows lifts to faithful usage for each sequential composition maps. Finally a monoidal natural isomorphism $A \cong [I,A]$ for the induced functor $[I,-]$ must be constructed. Indeed, up to unitor the inverse $\eta$ of $\texttt{eval}_{I \Rightarrow A}$, being an isomorphism by assumption, is such a candidate. $\texttt{eval}_{I \Rightarrow A}$ is natural for any closed monoidal category, so $\eta$ being its inverse is immediately also natural. Furthermore $\eta$ is easily checked to be monoidal. 
\begin{equation}
\tikzfig{figs/smsc_prove_monoidal}
\end{equation}
This completes the proof. 
\end{proof}


\section{The Apex of a Merger is Closed Monoidal} 
Here we prove our main technical result. As a recap, from a series of enriched monoidal categories $\mathbf{C}_{i-1}, \mathbf{C}_{i}, \mathbf{C}_{i+1}$ a chain of \textit{raising} functors $\mathcal{R}_{i-1}^{i}: \mathbf{C}_{i-1} \longrightarrow \mathbf{C}_{i}$ can be written down with $\mathcal{R}_{i-1}^{i}(-) = [I_{i-1},-]$. 
\begin{lemma}
In any merger of infinite order, the following condition holds for the isomorphism $\mu_i^{i+1} := \mathcal{F}^{i+1}(\gamma) \circ \eta_i^{i+1}$:
\begin{equation}
    \tikzfig{figs/eta_commute},
\end{equation}
where we use $x_i$ to denote the natural isomorphisms $x_i: \mathcal{F}^i (-) \otimes \mathcal{F}^i (=) \cong \mathcal{F}^i (- \otimes =)$ which make each $\mathcal{F}^I$ strongly monoidal. 
\end{lemma}
\begin{proof}
\begin{equation}
    \tikzfig{figs/eta_commute_2b}
\end{equation} 
\end{proof}
Indeed the above property  is the key ingredient in the construction of our main result. We work with the following definition of a closed symmetric monoidal category \begin{definition}
An SMC  $\mathbf{C}$ is {\em closed} if for every $A,B \in Ob(\mathbf{C})$ there exists an object $A \Rightarrow B$ and a morphism $\texttt{eval}_{A \Rightarrow B} : A \otimes A \Rightarrow B \rightarrow B$, called the {\em evaluation morphism}, such that for all $f: A \otimes C \rightarrow B$ there exists  a unique $\bar{f}: C \rightarrow (A \Rightarrow B)$ such that $\texttt{eval}_{A \Rightarrow B} \circ (id \otimes \bar{f}) = f$. 
\end{definition}
\begin{theorem}
The apex $\mathbf{C}$ of any Merger of infinite order is a closed symmetric monoidal category
\end{theorem}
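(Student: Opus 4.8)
The plan is to reduce the infinite-order Merger theorem to the already-proven finite characterisation (Lemma: linked faithful categories are closed monoidal). The apex $\mathbf{C}$ is equipped, via essential surjectivity of $\coprod_i \mathcal{F}_i$, with the property that every object is isomorphic to some $\mathcal{F}_i(A)$ for $A \in \mathbf{C}^i$. The strategy is to show that $\mathbf{C}$ inherits a self-enrichment together with a linking isomorphism $A \cong [I,A]$ and faithful usage, so that the hypotheses of the finite lemma are met and closedness follows immediately.

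\textbf{Step one: construct the internal hom and the linking.} First I would define, for objects coming from level $i$, the internal hom in $\mathbf{C}$ by transporting the enrichment $[-,-]^{i+1}$ of $\mathbf{C}^i$ in $\mathbf{C}^{i+1}$ along the embeddings $\mathcal{F}_i, \mathcal{F}_{i+1}$. The compatibility needed to make this well-defined across levels is exactly the content of the preceding Lemma on $\mu_i^{i+1} := \mathcal{F}^{i+1}(\gamma)\circ\eta_i^{i+1}$, which guarantees the raising functors and the enrichments commute coherently up to the specified isomorphisms. The monoidal natural isomorphisms $\eta_{i-1}^i : \mathcal{F}_{i-1}(-) \Rightarrow \mathcal{F}_i \circ \mathcal{R}_{i-1}^i(-)$ supply precisely the linking data $A \cong [I,A]$ inside $\mathbf{C}$, since $\mathcal{R}_{i-1}^i(-) = [I_{i-1},-]$.

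\textbf{Step two: assemble the self-enrichment and verify faithful usage.} Here I would check that the level-wise composition and tensor supermaps $\bigcirc$ and $\otimes_{ABA'B'}$, transported into $\mathbf{C}$, glue into a genuine $\mathbf{C}_{\cong}$-smc structure on $\mathbf{C}$ itself — a self-enrichment. Full faithfulness of each $\mathcal{F}_i$ ensures that homsets are faithfully represented, and the assumption that the sequence is fully faithful (each $\Gamma_i$ fully faithful and each enrichment faithful) propagates faithful usage from the layers to the apex: injectivity of $\bigcirc_{IAB}$ is preserved under the full and faithful embeddings. The infinite length of the tower is what makes this self-contained: for any two objects $A,B$ both originate (up to iso) at some finite level, and there is always a higher level $\mathbf{C}^{j}$ housing the supermaps between them, so no internal hom "escapes" the tower.

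\textbf{The hard part} will be checking coherence of the transported structure across \emph{different} starting levels — i.e. that the internal hom $[A,B]$ does not depend, up to canonical isomorphism, on which $i$ one chooses with $A,B \in \mathrm{im}\,\mathcal{F}_i$, and that the composition and evaluation morphisms agree on overlaps. This is where the commuting-square lemma for $\mu_i^{i+1}$ does the real work, converting the $2$-cone condition into the statement that all choices are coherently identified. Once self-enrichment, linking, and faithful usage are established for $\mathbf{C}$, the finite Lemma applies verbatim to produce evaluation morphisms $\texttt{eval}_{A,B}: A \otimes [A,B] \to B$ with the required co-universal property, yielding that $\mathbf{C}$ is a closed symmetric monoidal category and completing the proof.
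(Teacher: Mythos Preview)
Your strategy---reduce to the earlier Lemma by assembling a genuine self-enrichment on $\mathbf{C}$, then invoking linked + faithful $\Rightarrow$ closed---is a valid route in principle, but it is not the one the paper takes. The paper proceeds \emph{directly}: it makes an explicit choice of level $l_A$, preimage $X_A$, and isomorphism $L_A$ for each object (using essential surjectivity), sets $A \Rightarrow B := \mathcal{F}_{l_{AB}+1}[\mathcal{R}_{l_A}^{l_{AB}}X_A,\mathcal{R}_{l_B}^{l_{AB}}X_B]$, writes down $\texttt{eval}$ explicitly from $\bigcirc$ and the $\eta$'s, and then verifies the co-universal property by hand. Existence of $\bar f$ comes from fullness of the $\mathcal{F}_i$ together with the partial-insertion identity for $\Delta$; uniqueness comes from faithfulness of each enrichment, each $\mathcal{F}_i$, and invertibility of the $\mu_i^{i+1}$. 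In effect the paper \emph{inlines} the proof of the Lemma rather than invoking it.

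The trade-off is this: the paper's direct verification only needs the evaluation morphism to satisfy its universal property---nothing about $[-,-]$ being a functor, nothing about associativity or interchange of transported $\bigcirc$ and $\otimes$ maps in $\mathbf{C}$. Your route front-loads all of that work: to apply the Lemma as a black box you must first exhibit a bona fide $\mathbf{C}_{\cong}$-smc structure on $\mathbf{C}$, which means checking every enriched-monoidal coherence survives the transport and is independent of level choices. You correctly flag this as ``the hard part,'' and the $\mu_i^{i+1}$ lemma is indeed the right tool, but the amount of coherence-checking is substantially larger than what the paper's direct argument requires. Both approaches consume the same hypotheses (full faithfulness of the sequence, faithfulness of each enrichment, the $2$-cone isomorphisms $\eta$) and the same key lemma on $\mu$; the paper simply buys closedness more cheaply by skipping the intermediate self-enrichment.
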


\begin{proof}

Since the coproduct $\coprod_{i} \mathcal{F}_i$ is essentially surjective, each object $A$ can be assigned an object $X_A$ an ``index" $l_A$ and an isomorphism $L_A$ such that $L_A:A \rightarrow \mathcal{F}_{l_A}(X_A)$. A compact notation can be introduced for combinations of functors of the form $[I_i,-]$. 
\begin{itemize}
    \item $\mathcal{R}_{i}^{i+1} := [I_i,-]$
    \item $\mathcal{R}_{i}^{j} := \mathcal{R}_{j-1}^{j} \circ \mathcal{R}_{i}^{j-1}$
\end{itemize}
furthermore the function $l:ob(\mathbf{C}) \rightarrow \mathbb{N}$ can be extended to lists by \[l_{AB} := \texttt{max}(l_A,l_A)\]
After which one can define the object representing the space of morphisms from $A$ to $B$ by \[A \Rightarrow B := \mathcal{F}_{l_{AB}+1}[\mathcal{R}_{l_A}^{l_{AB}}(X_A),\mathcal{R}_{l_{B}}^{l_{AB}}(X_B)]\]
This is the object representing the lifting of both $A$ and $B$ in to the $C^{l_{AB}}$ which contains them both, and then using the process object in the next category $C^{l_{AB}+1}$ to represent the processes between them. For each $A$,$B$ an evaluation $e_{A \Rightarrow B}: A \otimes (A \Rightarrow B) \rightarrow B$ can be defined by 

\begin{equation}
    \tikzfig{figs/hopt_eval_1f},
\end{equation}
where $\eta_{k}^{n}$ is defined inductively. For each $n \geq k$ then $(\eta_{k}^{n+1})_{X} := (\eta_{n}^{n+1})_{\mathcal{R}_{k}^{n} X}  \circ (\eta_{k}^{n})_{X}$, and for each $n < k$ then $\eta_k^n := (\eta_{n}^k)^{-1}$. For $\mathbf{C}$ to be closed monoidal one must show that for every $A,B,C$ and for every $f:A \otimes C \rightarrow B$ there exists a unique $\bar{f}:C \rightarrow (A \Rightarrow B)$ such that, 
\begin{equation}
    \tikzfig{figs/hopt_eval_3}
\end{equation}

Indeed such a map $\bar{f}$ can be constructed. Firstly defining $g$ such that 
\begin{equation}
    \tikzfig{figs/hopt_eval_7f}
\end{equation}
Such a $g$ must exist since each functor $\mathcal{F}_i$ is full. In terms of this $g$ define $\bar{f}$ by
\begin{equation}
    \tikzfig{figs/hopt_eval_4}
\end{equation}

Then to prove the required identity first requires repeated application of  lemma (16),
\begin{equation}
    \tikzfig{figs/hopt_eval_2f}
\end{equation}
and then using the defining identity for the partial insertion operation $\Delta$.
\begin{equation}
    \tikzfig{figs/hopt_eval_5f}
\end{equation}
and finally using monoidal naturality of the transformation $\eta_{l_{ABC}+1}^{l_{{ABC}}}$. 
\begin{equation}
    \tikzfig{figs/hopt_eval_6f}
\end{equation}
The morphism $\bar{f}$ satisfying $e \circ (A \otimes \bar{f}) = f$ must be demonstrated to be unique. Every $\mu_{i}^{j}$ is an isomorphism by fully faithful-ness of the sequence of enriched monoidal categories, as a result every morphism $h: C \rightarrow A \Rightarrow B$ can be written in the form
\begin{equation}
    \tikzfig{figs/hopt_eval_8}
\end{equation}
Where in the last line fullness of each $\mathcal{F}_i$ is used. Assuming $h$ and $h'$ have decomposition in terms of $m$ and $m'$ respectively both evaluate to the same morphism $e \circ (A \otimes h) = e \circ (A \otimes h')$:
\begin{equation}
    \tikzfig{figs/hopt_eval_9f},
\end{equation}
which in turn implies 
\begin{equation}
    \tikzfig{figs/hopt_eval_10f}.
\end{equation}
Since each $\eta$ and $L$ is an isomorphism, and each composition morphism $\bigcirc$ is part of the structure of a faithful monoidal enrichment, and each $\mathcal{F}_i$ is faithful this entails that $m = m'$ and as a result that $\bar{f} = \bar{f}'$. It follows that $\bar{f}$ is the unique morphism satisfying the evaluation condition for $f$.

\end{proof}

\end{document}